\newtheorem{theorem}{Theorem}
\newtheorem{remark}{Remark}
\newtheorem{proposition}{Proposition}
\newtheorem{lemma}{Lemma}
\newenvironment{proof}{\noindent \emph{Proof. }}{\hfill \hbox{\rlap{$\sqcap$}$\sqcup$}\\}
\title{Density of Binary Disc Packings:\\ Playing with Stoichiometry}
\author{Thomas Fernique\\{\small LIPN, Univ. Paris Nord \& CNRS}}
\date{}
\begin{document}

\maketitle

\begin{abstract}
We consider hard-disc mixtures with disc sizes within ratio $\sqrt{2}-1$, that is, the small disc exactly fits in the hole between four large discs.
For each prescribed stoichiometry of large and small discs, the densest packings are rigorously determined via a computer-assisted proof.
The density is maximal for the 1:1 stoichiometry: the large discs then form a square grid in each interstitial site of which a small disc nests.
When there is an excess of large discs, the densest packings are made of a single phase which mixes the two types of discs in a chaotic way (it can be described by square-triangle tilings).
When there is an excess of small discs, on the contrary, a phenomenon of phase separation appears: the large discs are involved in the densest 1:1 stoichiometry phases while the excess of small discs form compact hexagonal phases.
\end{abstract}

\section{Introduction}

In materials science, the stability of a structure is often dictated by an interaction potential which is attractive at long range and repulsive at short range ({\em e.g.} the Lennard-Jones potential).
For example, nanoparticles are nanometric crystals of spherical shape surrounded by a layer of molecules called ligands; they tend to aggregate under the action of attractive Van der Waals type forces and to repel each other due to the ligands (which can be thought of as sort of tiny springs).
Under certain circumstances, they self-assemble in an ordered way to form so-called supercrystals.
This is particularly interesting when the particles are of two (or more) different types and mix in a single phase.
It mays indeed allow to combine the physico-chemical properties of each type of particle in a single material.
This is what motivates here our interest in a possible phase separation phenomenon in such structures.

\begin{figure}[hbt]
\centering
\includegraphics[width=\textwidth]{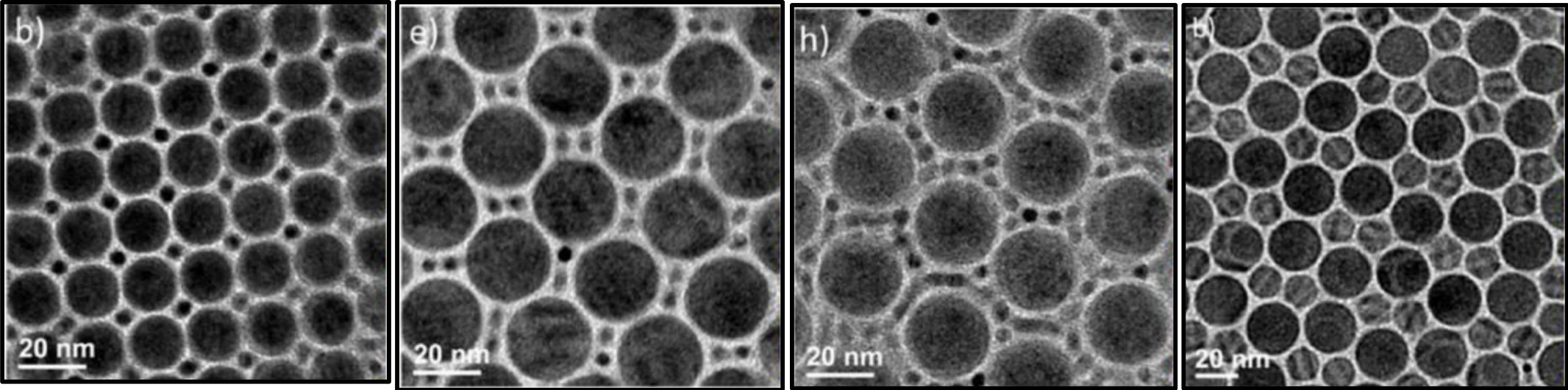}
\caption{
Supercrystals obtained in \cite{PDKM15} (courtesy of the authors).}
\label{fig:PDKM}
\end{figure}

A nice example of an ordered structure which harmoniously mixes different particles is provided by the binary supercrystals obtained in \cite{PDKM15} and reproduced in Figure~\ref{fig:PDKM}.
These supercrystals have moreover a remarkable property: they faithfully correspond to $4$ of the nine binary disc packings which have been proved in \cite{BF20} to maximize the density, compare with Figure~\ref{fig:triangulated}.
This suggests that minimizing the interaction energy amounts to maximizing the density.
Some mathematical results support this hypothesis (see, {\em e.g.}, \cite{Bet21,The06}), but the question still remains largely open.
This is what motivates here our interest in disc packings of maximal density, also called infinite-pressure hard-disc mixtures in the literature.

\begin{figure}[hbt]
\centering
\includegraphics[width=\textwidth]{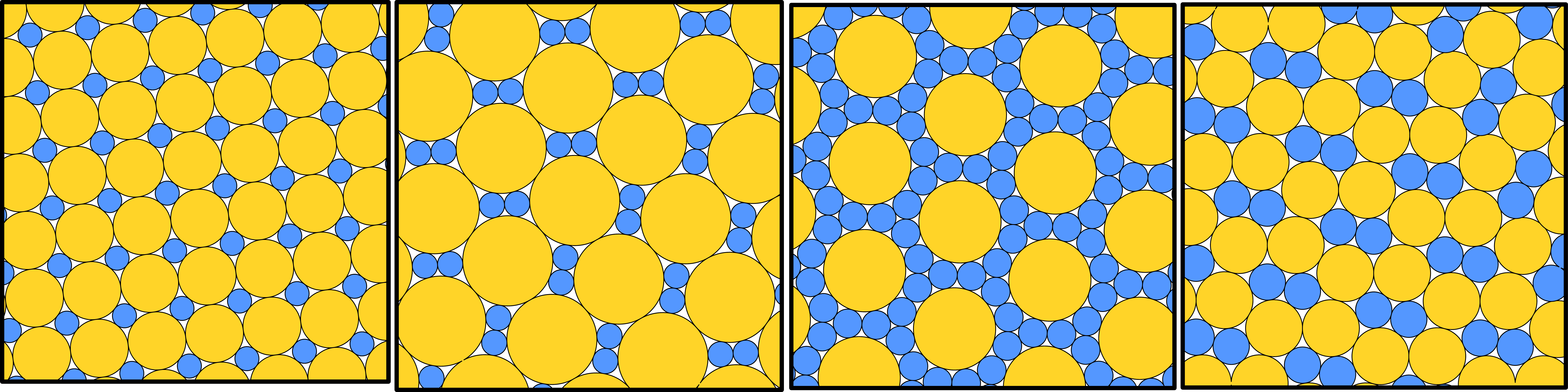}
\caption{Disc packings proven in \cite{BF20} to maximize the density.}
\label{fig:triangulated}
\end{figure}

In this article, we adopt a mathematical point of view.
We model structures by disc packings and restrict ourselves to the case of discs whose size ratio is $\sqrt{2}-1$, that is the first case on Fig.~\ref{fig:triangulated}.
Section~\ref{sec:results} is dedicated to the mathematical formalization of the problem and to the statement of the main results, namely Theorems~\ref{th:density} and \ref{th:packings}.

Theorem~\ref{th:density} explicitly gives the maximal density of a disc packing with disc sizes within ratio $\sqrt{2}-1$ as a function of the proportion of the large discs.
The claimed density is first proven to be a lower bound in Section~\ref{sec:lower} by using elementary combinatorics on words to construct packings which achieve this density.
The theorem follows once it is then proven to be also an upper bound.
This is done in Section~\ref{sec:upper}.
This is the essential part of this article, as upper bound on density of packings are usually hard to prove (think, {\em e.g.}, about the Hales-Ferguson proof of the Kepler conjecture \cite{Hal05}).
Moreover, there is a continuum of cases to be dealt with here (for each possible stoichiometry).
In a nutshell, we will show that any packing within each small enough interval of possible stoichiometry can be partitioned into cells whose densities - which can vary according to the cell - can be distributed between neighboring cells in such a way the average density over these neighbor cells is bounded from above by the claimed maximal density.
The proof is computer-assisted but nevertheless rigorous: it does not rely on numerical approximation but, instead, use intervals with exactly representable endpoints to deal with real numbers (see, {\em e.g.} \cite{Tuc11} for an introduction on rigorous computations).
It is adapted from the proof in \cite{BF20} with only a small improvement to handle the fixed stoichiometry constraint.
As the proof of \cite{BF20} is long and technical, we will not reproduce it here.
Instead, we will simply recall the main lines and insist on the modifications made.
This paper is thus {\em not} self-contained.

Theorem~\ref{th:packings} characterizes the local configurations which can appear with positive density in a densest packing.
This allows to describe how these densest packings look like.
It is proven in Section~\ref{sec:upper} and results from a further improvement of the proof in \cite{BF20}.

Last, in Section~\ref{sec:entropy}, we show that when there is no phase separation, {\em i.e.}, when there are more large discs, the densest packings are quite complicated in the sense that the number of different patterns they can form grows exponentially fast with the size of the considered patterns (Proposition~\ref{prop:entropy}).
The proof relies on a simple ``tiling widget''.

In this paper, we consider only the first case depicted in Fig.~\ref{fig:PDKM} or \ref{fig:triangulated}.
We conjecture that the second case is very similar, with a phase separation only for an excess of small discs, while the last two cases are less interesting, with a phase separation as soon as the stoichiometry differs from the one of the packings shown in Fig.~\ref{fig:triangulated}.
More generally, the ultimate goal would be to determine the whole phase diagram of binary disc packing, {\em i.e.}, to characterize the densest packings for any radius and any stoichiometry.
Strict bounds for any radius are given in \cite{Fer21}, without constraint on the stoichiometry.
A quite interesting though non-rigorous general picture of the phase diagram was proposed in \cite{LH93} and later improved in \cite{FJFS20}.

\section{Statement of the results}
\label{sec:results}

A {\em disc packing} (or {\em circle packing}) is a set of interior-disjoint discs in the Euclidean plane.
Its {\em density} $\delta$ is the proportion of the plane covered by the discs:
$$
\delta:=\limsup_{k\to \infty}\frac{\textrm{area of the square $[-k,k]^2$ covered by discs}}{\textrm{area of the square $[-k,k]^2$}}.
$$
We consider packings by discs of radius $1$ and $r:=\sqrt{2}-1$, simply called ``packings'' afterwards.
Discs of radius $1$ are called {\em large discs} while discs of radius $r$ are called {\em small discs}.
The packings with a proportion $x$ of large discs are called {\em $x$-packings}.
We denote by $\delta(x)$ the supremum of the densities over all the $x$-packings.
On the one hand, it was proven in \cite{FT43}
$$
\delta(0)=\delta(1)=\frac{\pi}{2\sqrt{3}}\approx 0.9069.
$$
On the other hand, it was proven in \cite{Hep00} (see also \cite{Hep03,BF20})
$$
\forall x\in[0,1],\quad\delta(x)\leq\delta(\tfrac{1}{2})=\frac{\pi}{2+\sqrt{2}}\approx 0.9202.
$$
These densities are reached by the periodic packings depicted in Fig.~\ref{fig:packings_known}.
Here, we extend this to any stoichiometry $x$.

\begin{figure}[hbt]
\centering
\includegraphics[width=\textwidth]{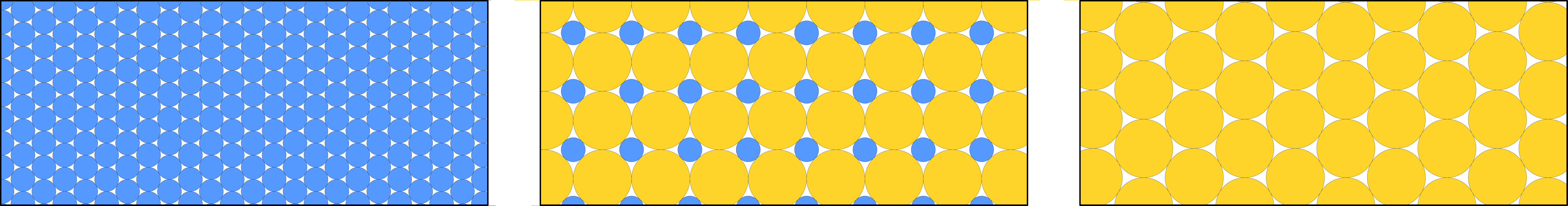}
\caption{Periodic packings of maximal density for a proportion $0$, $\tfrac{1}{2}$ and $1$ of large discs.}
\label{fig:packings_known}
\end{figure}

\begin{theorem}
\label{th:density}
For $0\leq x\leq 1$, the maximal density $\delta(x)$ of $x$-packings is (Fig.~\ref{fig:density}):
$$
\delta(x\leq\tfrac{1}{2})=\frac{\pi(x+(1-x)r^2)}{4x+2(1-2x)r^2\sqrt{3}},
\qquad
\delta(x\geq\tfrac{1}{2})=\frac{\pi(x+(1-x)r^2)}{4(1-x)+2(2x-1)\sqrt{3}}.
$$
\end{theorem}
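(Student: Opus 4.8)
The plan is to prove the two matching inequalities $\delta(x)\ge f(x)$ and $\delta(x)\le f(x)$, where $f(x)$ denotes the claimed expression. A useful preliminary observation is that, for a large region carrying $n_L$ large and $n_S$ small discs and having area $A$, the bound $\delta(x)\le f(x)$ with $x=n_L/(n_L+n_S)$ reduces, after clearing denominators and simplifying, to a purely \emph{linear} lower bound on the area:
\begin{align}
A &\ \ge\ (4-2\sqrt{3}\,r^2)\,n_L+2\sqrt{3}\,r^2\,n_S \quad (x\le\tfrac12),\label{eq:lowS}\\
A &\ \ge\ 2\sqrt{3}\,n_L+(4-2\sqrt{3})\,n_S \quad (x\ge\tfrac12).\label{eq:lowL}
\end{align}
The two forms agree at $x=\tfrac12$ and are tight on the pure phases of Fig.~\ref{fig:packings_known}, recovering area $4$ per large--small pair, $2\sqrt{3}$ per large disc and $2\sqrt{3}\,r^2$ per small disc. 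Thus the whole theorem amounts to realizing these area budgets (lower bound) and never beating them (upper bound).

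For the lower bound I would exhibit, for every $x$, a family of packings whose density tends to $f(x)$. For $x\ge\tfrac12$ a single mixed phase suffices: take a square--triangle tiling of edge length $2$, place a large disc at each vertex and a small disc at the centre of each square tile (it touches exactly the four corner discs). A short count gives $n_L=s+\tfrac{t}{2}$ and $n_S=s$ for $s$ squares and $t$ triangles, and the total tile area $4s+\sqrt{3}\,t$ then reproduces exactly the denominator of $f$; letting the square-to-triangle ratio vary, and using elementary combinatorics on words to realize any prescribed (even irrational) ratio, sweeps $x$ through $[\tfrac12,1]$. For $x\le\tfrac12$ each square can host at most one small disc, so the surplus of small discs forces \emph{phase separation}: I would fill the plane with large blocks of the square $1\!:\!1$ phase and large blocks of the hexagonal small-disc phase, in the area proportion dictated by \eqref{eq:lowS}. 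In both cases the interfaces have negligible relative area as the block size grows, so the $\limsup$ defining the density converges to $f(x)$ and $\delta(x)\ge f(x)$.

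For the upper bound I would follow the computer-assisted scheme of \cite{BF20}, whose purpose is precisely to prove area lower bounds of this type. One triangulates the plane by joining centres of sufficiently close discs, charges each triangle with its area and with the disc mass it carries, and redistributes area between neighbouring triangles so that, after redistribution, \emph{every} cell satisfies the relevant bound \eqref{eq:lowS} or \eqref{eq:lowL}. The only change with respect to \cite{BF20} is the weighting: there one proves the single uniform bound $A\ge\frac{4}{1+r^2}\bigl(n_L+r^2 n_S\bigr)$, which is tight only at $x=\tfrac12$ and yields the global maximum $\delta(\tfrac12)$, whereas here large and small discs must be charged by the two distinct coefficients appearing in \eqref{eq:lowS}--\eqref{eq:lowL}. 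Since these coefficients depend on the regime, I would run the redistribution separately on finitely many sub-intervals of $[0,1]$ and, on each, verify the finitely many resulting per-configuration inequalities rigorously with interval arithmetic on exactly representable endpoints, in the spirit of \cite{Tuc11}.

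The step I expect to be the genuine obstacle is making this redistribution tight \emph{simultaneously for all} $x$. In \cite{BF20} a single constant target $\delta(\tfrac12)$ must be beaten, but here the extremal value slides continuously along the curve $f$, so the local inequalities must remain provable as the disc weights vary, without the number of configuration cases exploding, and the worst cell in each sub-interval must still match the tight pure-phase values. Arranging the stoichiometry-dependent charges so that this holds is exactly the ``small improvement'' over \cite{BF20} alluded to in the introduction, and it is where the difficulty, and the bulk of the computer verification, will lie.
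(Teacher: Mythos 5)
Your proposal is correct and takes essentially the same route as the paper: a lower bound via explicit constructions (square--triangle tilings driven by combinatorics on words for $x\ge\tfrac12$, phase separation with asymptotically negligible interface for $x\le\tfrac12$, which the paper realizes via the expanded Sturmian word $\widehat{u}$), and an upper bound via the localization scheme of \cite{BF20} modified by species-dependent charging and verified by interval arithmetic over finitely many sub-intervals of $[0,1]$. Your linear area budgets are exactly the paper's constraint $x\alpha_1+(1-x)\alpha_r=0$ in disguise: setting $\alpha_1=\delta(x)(4-2\sqrt{3}r^2)-\pi$ and $\alpha_r=\delta(x)\cdot 2\sqrt{3}r^2-\pi r^2$ (for $x\le\tfrac12$, and analogously for $x\ge\tfrac12$) recovers precisely the paper's tangent-line normalization forced by its defining equations for the base vertex potentials.
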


\begin{figure}[hbt]
\centering
\includegraphics[width=\textwidth]{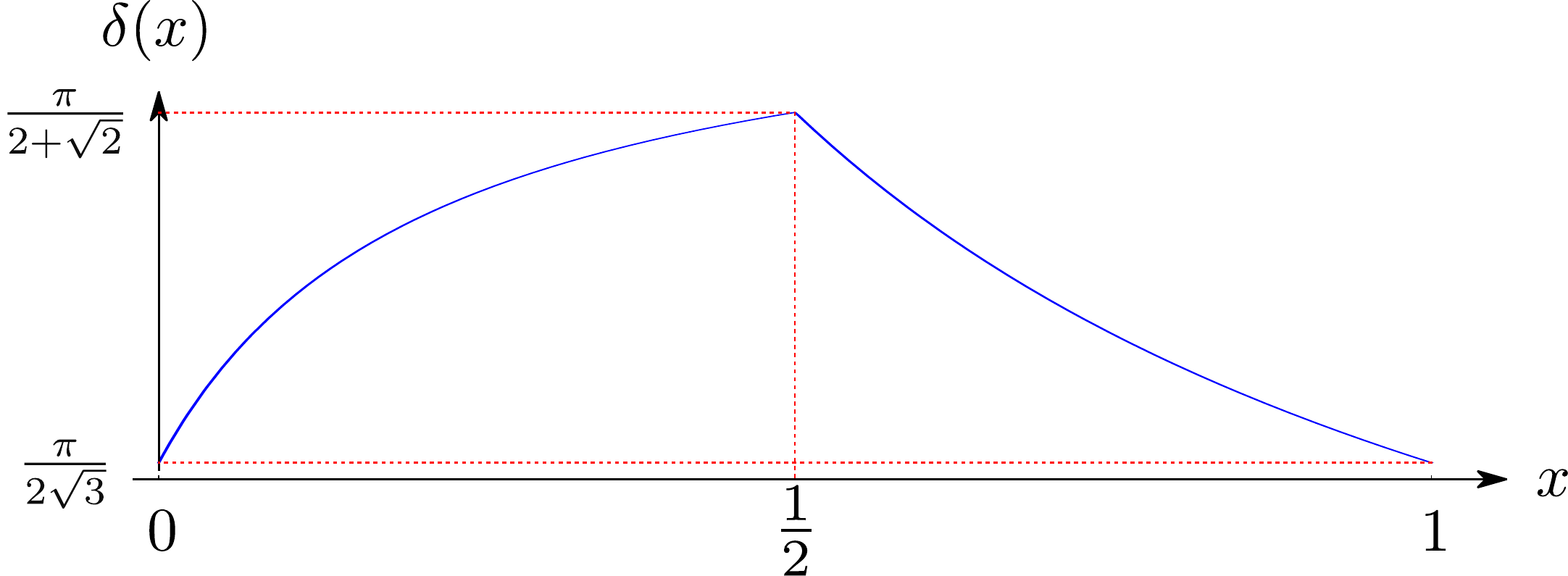}
\caption{The maximal density as a function of the proportion $x$ of large discs.}
\label{fig:density}
\end{figure}

Further, we would like to describe the set of all the densest $x$-packings.
One difficulty that arises is that, whatever packing we consider, there are continuously many packings with the same density that are different, even though they look pretty much the same (Fig.~\ref{fig:packings_known2}).
We need to formalize this similiarity.

\begin{figure}[hbt]
\centering
\includegraphics[width=\textwidth]{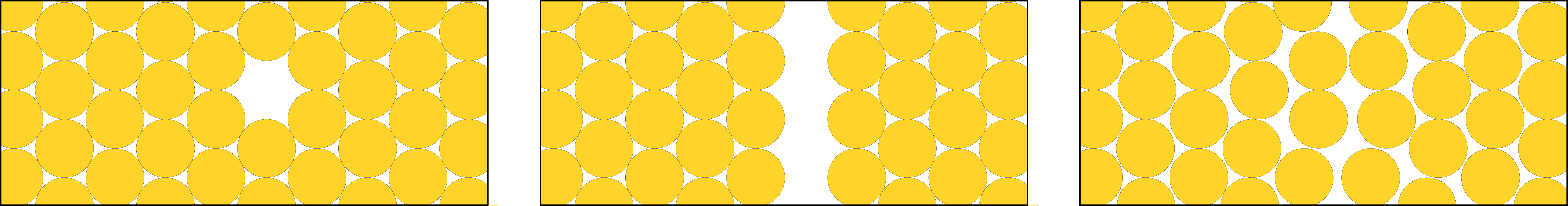}
\caption{
Removing a disc in a packing does not affect the density (left).
This holds for infinitely many discs, as long as they are in negligible proportion (center).
Remaining discs can then be slightly moved (even arbitrarily far ones) without affecting the density (right).
}
\label{fig:packings_known2}
\end{figure}

Given a disc packing, define the {\em cell} of a disc as the set points of the plane which are closer to this disc than to any other (the boundary between two cells is thus a branch of hyperbola).
These cells form a partition of the plane whose dual is a triangulation\footnote{In the non-generic case where there is a point in the plane equidistant from $k>3$ discs, and thus a $k>3$-sided face in the dual, this face is triangulated in an arbitrary way.}, referred to as the {\em FM-triangulation} of the packing.
Introduced in \cite{FM58} (see also \cite{FT64}), FM-triangulations are also known as {\em additively weighted Delaunay triangulations}.
For example, the FM-triangulations of the leftmost and rightmost packings in Fig.~\ref{fig:packings_known} are triangular grids (the one of the central packing is a so-called {\em tetrakis square tiling}).
A more generic example is depicted in Fig.~\ref{fig:FM_triangulation}.

\begin{figure}[hbt]
\centering
\includegraphics[width=0.47\textwidth]{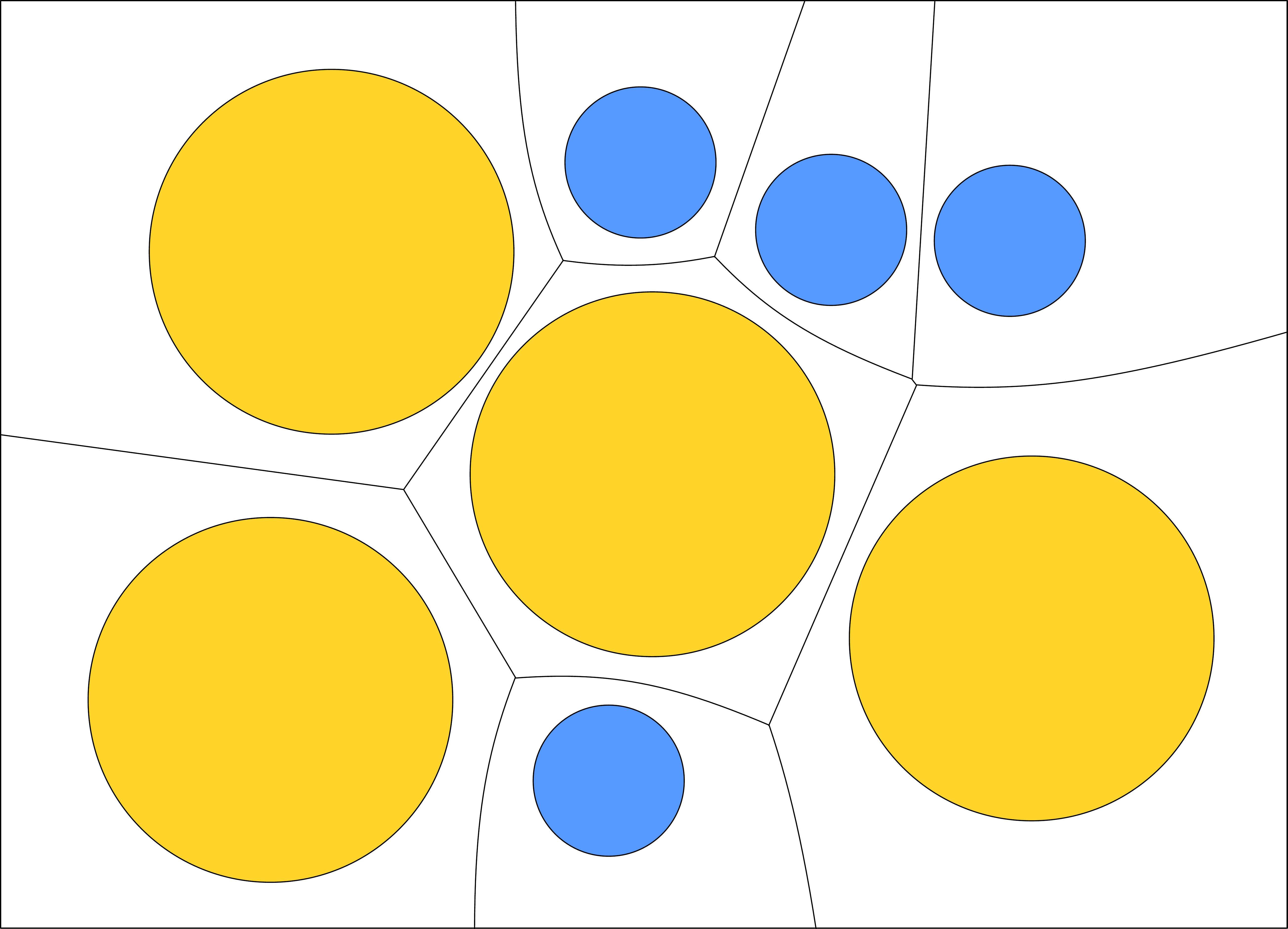}
\hfill
\includegraphics[width=0.47\textwidth]{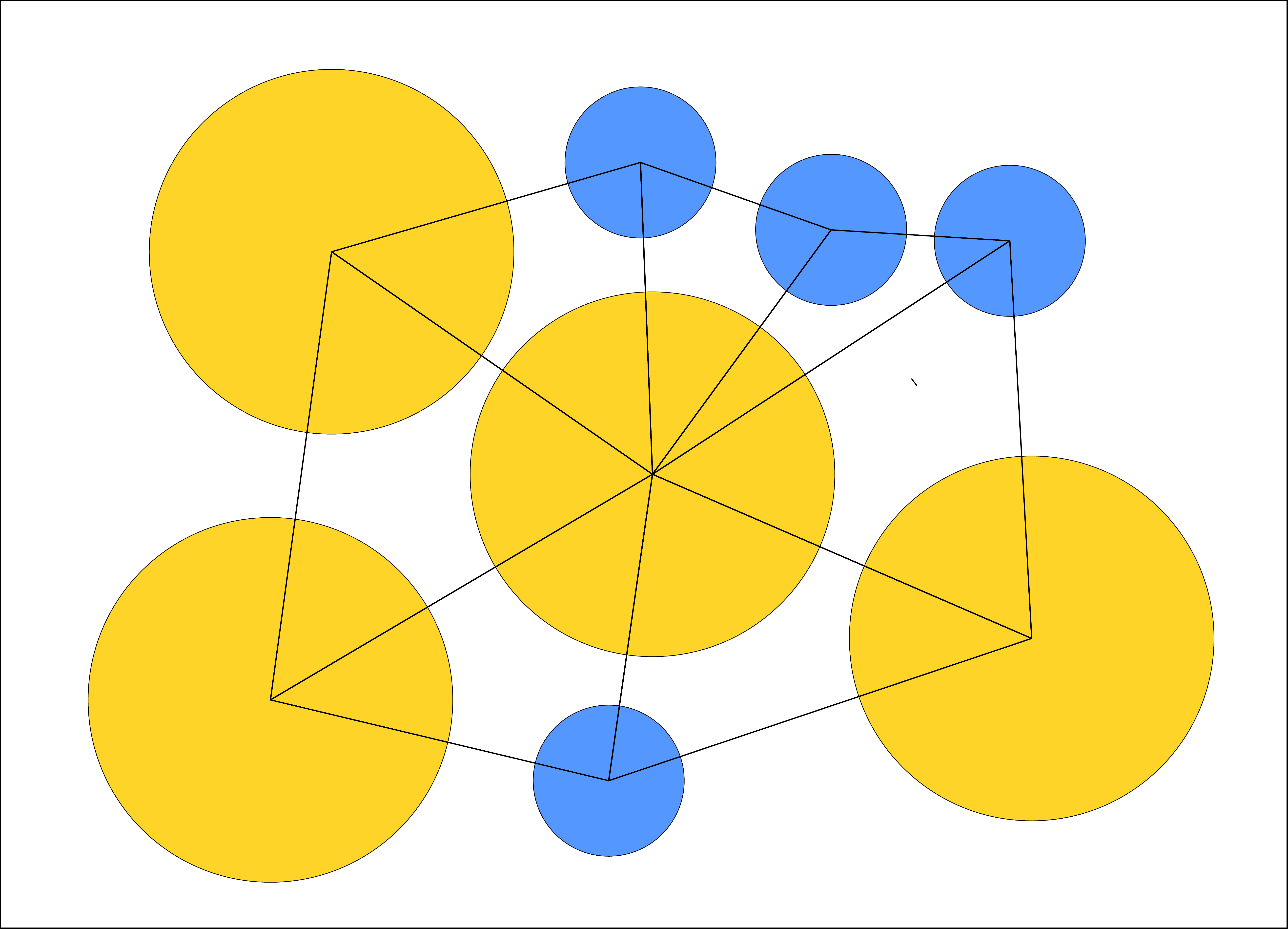}
\caption{
Some discs in the plane and their cells (left).
The corresponding dual graph defines the FM-triangulation (right).
}
\label{fig:FM_triangulation}
\end{figure}

Now, let us call {\em neighborhood} of a disc its neighbors in the FM-triangulation, ordered clockwise and up to a circular permutation.
Such a neighborhood is coded by the word over the alphabet $\{1,r\}$ which gives the radii of the neighbor discs.
For example, in Fig.~\ref{fig:packings_known}, each small disc has the neighborhood rrrrrr in the leftmost packing or 1111 in the central packing, while each large discs has the neighborhood 1r1r1r1r in the central packing or 111111 in the rightmost packing.
This allows to state out second result:

\begin{theorem}
\label{th:packings}
Consider an $x$-packing with density $\delta(x)$.
If $x\leq\tfrac{1}{2}$, then almost any small (resp. large) disc has a neighborhood 1111 or rrrrrr (resp. 1r1r1r1r).
If $x\geq \tfrac{1}{2}$, then almost any small (resp. large) disc has a neighborhood 1111 (resp. 1r1r1r1r, 1111r1r, 111r11r or 111111).
\end{theorem}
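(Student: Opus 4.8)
The plan is to reuse the upper-bound argument of Section~\ref{sec:upper} and to extract a \emph{rigidity} statement from it. Recall that this argument cuts a packing along its FM-triangulation and bounds the density by a \emph{local} computation: to each triangle one attaches a value measuring its covered-area excess over $\delta(x)$, the stoichiometry constraint being fed in by distributing, to the triangles around every disc, a ``chemical-potential'' charge $\mu(\mathbf 1_{\text{large}}-x)$ whose total is forced to vanish because the proportion of large discs equals $x$. The multiplier $\mu$ is tuned to the phase-coexistence value so that the resulting local value $f$ is nonpositive, which is certified by interval arithmetic over the finite catalogue of admissible local configurations. The one thing I would add is the \emph{separation} of the saturating configurations: record the finite list $\mathcal{T}$ of configurations with $f=0$, and verify a uniform strict gap $f\le-\varepsilon$ for all the others, with $\varepsilon>0$ depending only on the stoichiometry subinterval considered. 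Producing this quantified gap, rather than just the sign of $f$, is the improvement over \cite{BF20}.

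First I would convert the gap into an almost-everywhere statement. Assume an $x$-packing has density exactly $\delta(x)$. Summing $f$ over the triangles meeting $[-k,k]^2$ gives a quantity that, up to an $O(k)$ boundary error, equals $(\delta-\delta(x))(2k)^2$ and is therefore $o(k^2)$ since $\delta=\delta(x)$. As every triangle has $f\le 0$ while those outside $\mathcal{T}$ have $f\le-\varepsilon$, the non-saturating triangles must occupy a vanishing fraction of the area; hence the discs whose FM-neighbourhood is not built entirely from configurations of $\mathcal{T}$ form a set of density zero. This is the ``almost any disc'' of the statement.

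It then remains to read the admissible neighbourhoods off $\mathcal{T}$, a finite combinatorial check. For $x\le\tfrac12$ the coexisting saturating phases are the tetrakis square phase and the compact hexagonal small-disc phase, so $\mathcal{T}$ consists of the triangle $1,1,r$ and the equilateral triangle $r,r,r$; the only vertex figures one can assemble from these are \texttt{1111} and \texttt{rrrrrr} around a small disc and \texttt{1r1r1r1r} around a large disc. For $x\ge\tfrac12$ the coexisting phases are the square-triangle tilings, so $\mathcal{T}$ consists of the triangles $1,1,r$ and $1,1,1$; here each small disc sits at the centre of a square cell with neighbourhood \texttt{1111}, while a large disc is a vertex of the tiling, and its four possible vertex figures $4^4$, the two cyclic types of $3^3 4^2$, and $3^6$ translate respectively into \texttt{1r1r1r1r}, \texttt{1111r1r}, \texttt{111r11r} and \texttt{111111}. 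This reproduces the two lists of the theorem.

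The hard part is the first step, and it is where the computer is indispensable: one must establish the gap $\varepsilon$ \emph{uniformly} over the whole continuum of stoichiometries and check that the catalogue $\mathcal{T}$ stays constant on each of the two regimes. This forces a partition of $[0,1]$ into finitely many subintervals on which a single multiplier $\mu$ and a single interval-arithmetic certificate work, and it requires controlling the degeneration of $\varepsilon$ as $x\to\tfrac12$, where the two regimes meet, and as $x\to 0,1$. By contrast, once $\mathcal{T}$ is known the gluing enumeration above is elementary; the only genuine care there is to confirm that the rigidity of the $1,1,r$ and $r,r,r$ (resp.\ $1,1,1$) triangles leaves no vertex figure other than those listed.
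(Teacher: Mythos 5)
Your skeleton---a quantitative strengthening of the local inequalities for non-optimal configurations, averaged over the FM-triangulation to conclude that defects have vanishing density, followed by a combinatorial identification of the defect-free neighbourhoods---is indeed the skeleton of the paper's proof, which bounds the proportion of defective discs by $(\delta(x)-\delta)/\eta$ and takes $\delta=\delta(x)$. But the level at which you place the strict gap makes your first (and, as you say, hard) step unworkable. You require a uniform bound $f\le-\varepsilon$ for every triangle outside a finite catalogue $\mathcal{T}$ of \emph{geometric} triangles (the tight 11r, rrr, resp.\ 111 ones). No such $\varepsilon$ exists: $f$ is continuous in the positions of the three discs and vanishes on $\mathcal{T}$, so non-tight perturbations of a tight triangle have $f$ arbitrarily close to $0$. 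Worse, the intermediate statement you want to extract---in a packing of density exactly $\delta(x)$, triangles outside $\mathcal{T}$ cover a vanishing fraction of the plane---is false, not merely unproved. Take the optimal $\tfrac12$-packing, insert a horizontal slab of height $g_j=\tfrac{1}{1+|j|}$ between the $j$-th and $(j+1)$-th rows of large discs, and recenter each small disc in its slightly enlarged hole. This is still a $\tfrac12$-packing, and since $\sum_{|j|\le k}g_j=O(\log k)=o(k)$ its density is still exactly $\delta(\tfrac12)$; yet \emph{every} triangle of its FM-triangulation is a non-tight perturbation of the 11r triangle, so the non-saturating triangles have area fraction $1$, not $0$. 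The ``almost everywhere'' statement therefore cannot be about exact geometric configurations; it has to be formulated in combinatorial terms.

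That reformulation is precisely the paper's one extra idea beyond the density bound. The good/bad dichotomy is placed on the \emph{neighbourhood word} of a disc (bad = not in the theorem's list), and the gap is put into the vertex inequality \eqref{eq:vertex}: there exists $\eta>0$ (the computer certifies $\eta=10^{-4}$ on each stoichiometry subinterval) such that $\sum_{T\ni v}U_v(T)\ge\alpha_q+\eta$ whenever the word of $v$ is bad. Uniformity is now achievable because each combinatorial class of neighbourhoods is a separate compact family checked by interval arithmetic, and a bad word cannot degenerate into a good one without changing the combinatorics; in the counterexample above all words remain 1111 and 1r1r1r1r, consistently with the theorem. Your averaging step then goes through verbatim and bounds the proportion of bad-word discs by $(\delta(x)-\delta)/\eta$, which vanishes at $\delta=\delta(x)$; moreover the theorem follows with no enumeration step at all, since the good words are by definition those in the statement. (Your enumeration of vertex figures assemblable from the tight triangles is correct and explains why that list is the natural one, but to use it as a proof step you would first have to repair step one: a gap outside a $\rho$-neighbourhood of $\mathcal{T}$, then an argument that a disc all of whose triangles are $\rho$-close to tight ones, with angles near $45^\circ$, $60^\circ$, $90^\circ$ summing to exactly $2\pi$ and compatible along shared edges, must carry one of the listed words. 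The paper's formulation bypasses all of this.) A secondary discrepancy: the paper's localization is not a per-triangle value with exact per-vertex charges; it distributes the emptiness of each triangle to its vertices \emph{and edges}, with inequality slack both in \eqref{eq:vertex} and in the edge inequality, and this extra flexibility (in particular the edge potentials inherited from \cite{BF20}) is what makes the interval-arithmetic certification feasible in the first place.
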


For $x\leq \tfrac{1}{2}$, this result (which may seem confusing at first glance) yields a good insight into the look of the densest packings.
Consider indeed a small disc.
If its neighborhood is rrrrrr, then each of these $6$ small neighbor discs has itself three small neighbor discs, hence a  neighborhood rrrrrr (since the only other neighborhood, 1111, is no more possible).
Continuing from neighbor to neighbor unveils the hexagonal compact packings depicted in Fig.~\ref{fig:packings_known}, left.
If, on the contrary, its neighborhood is 1111, then each of the $4$ small neighbor discs has itself neighborhood 1111, and each of the $4$ large neighbor discs has neighborhood 1r1r1r1r.
Continuing from neighbor to neighbor this times unveils discs packed as in Fig.~\ref{fig:packings_known}, center.
Both cases can actually coexist in the same packing, because Theorem~\ref{th:packings} deals only with {\em almost} any disc: the negligible proportion of discs with other neighborhood can play the role of a ``joint'' between large regions of near-periodically packed discs.
The packing then looks like what is called in materials science a {\em twinned crystal}, {\em i.e.}, an aggregate of different crystalline domains (Fig.~\ref{fig:twinning}).

\begin{figure}[hbt]
\centering
\includegraphics[width=\textwidth]{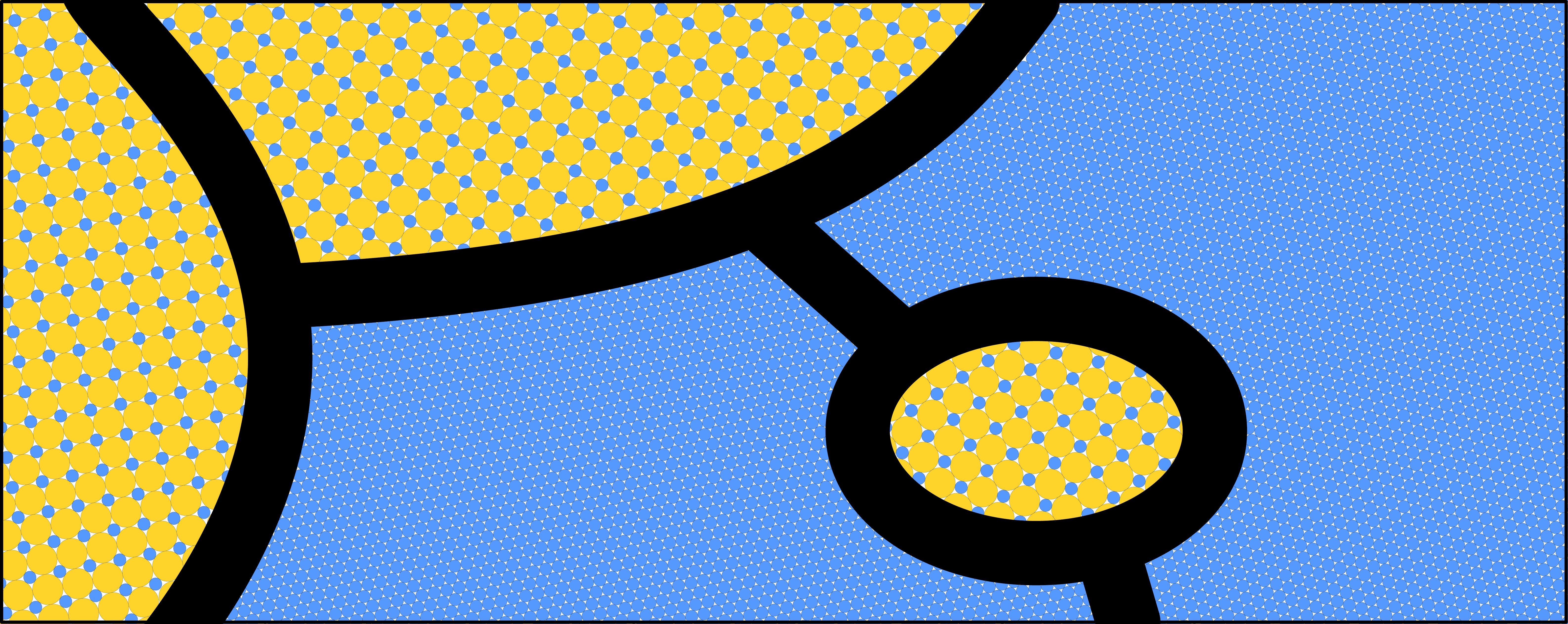}
\caption{
Typical look of a densest packing with an excess of small discs, with thick black lines covering the negligible proportion of discs with a neighborhood not listed in Th.~\ref{th:packings}.
}
\label{fig:twinning}
\end{figure}

For $x>\tfrac{1}{2}$, we also get aggregates of large regions in which only the neighborhoods mentioned in Theorem~\ref{th:packings} appear.
But these regions may not be near-periodically packed anymore.
They actually look like so-called {\em square-triangle tilings}.
These are coverings of the plane by interior disjoint squares and regular triangles allowed to intersect only in a single point or a on a whole edge (Fig.~\ref{fig:tiles2discs}, left).
Any square-triangle tiling can indeed be transformed into a disc packing (Fig.~\ref{fig:tiles2discs}, right).
A simple computation shows that if there is a proportion $2x-1$ of squares, then we get an $x$-packing of density $\delta(x)$.
Conversely, the possible disc neighborhoods in an $x$-packing of density $\delta(x)$ for $x\geq\tfrac{1}{2}$ ensure it can be transformed into a square-triangle tiling (Fig.~\ref{fig:discs2tiles}).
However, one could argue about how much this correspondence really sheds light on the set of densest packings.
Indeed, the set of square-triangle tilings with a given proportion of squares is itself quite complicated, as it will be illustrated by Proposition~\ref{prop:entropy}.

\begin{figure}[hbt]
\centering
\includegraphics[width=\textwidth]{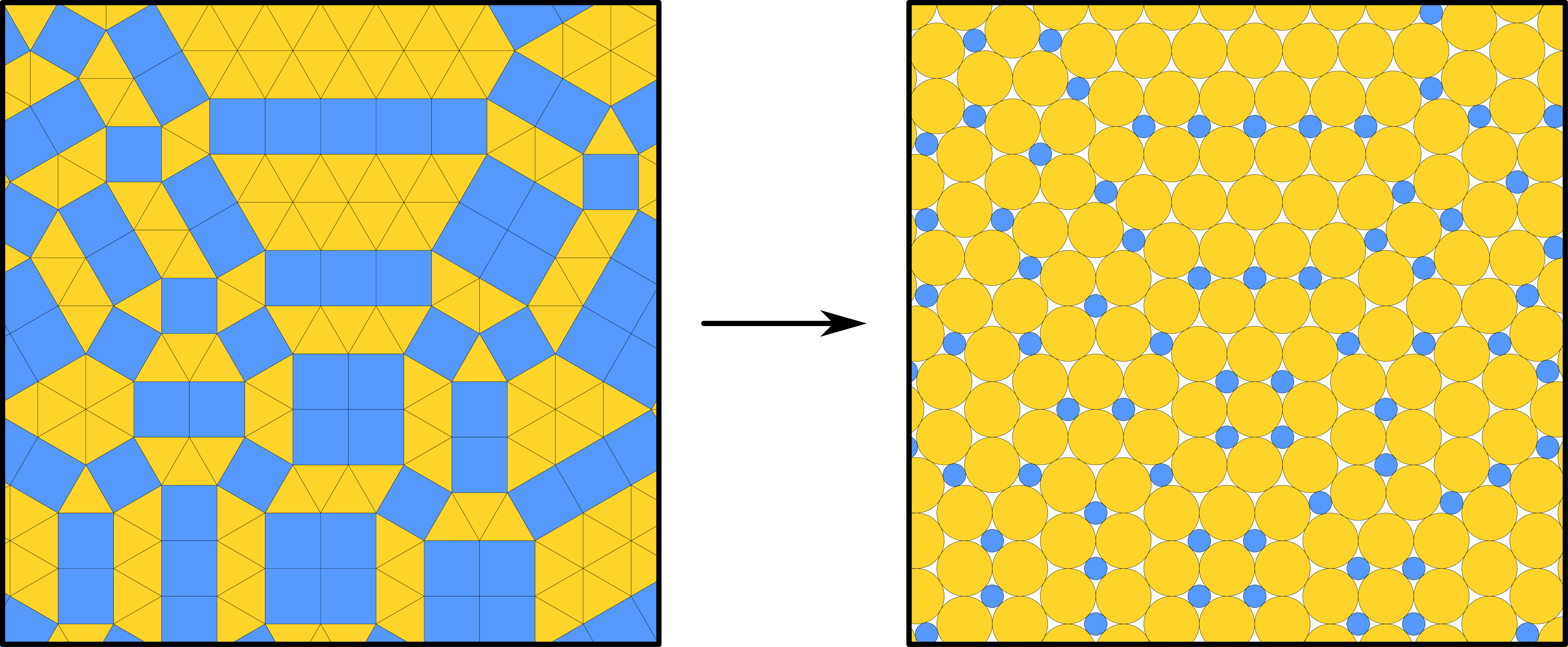}
\caption{
A square-triangle tiling (left) can be transformed into a packing by discs of radius $1$ and $r=\sqrt{2}-1$ (right) by putting a large disc on each vertex and a small disc at the center of each square.
}
\label{fig:tiles2discs}
\end{figure}

\begin{figure}[hbt]
\centering
\includegraphics[width=\textwidth]{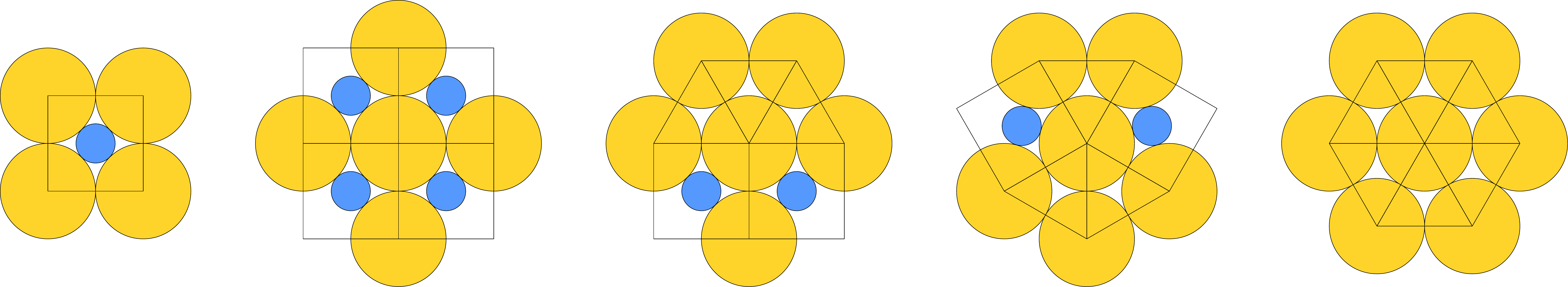}
\caption{
A disc packing with only suitable disc neighborhoods (from left to right: 1111, 1r1r1r1r, 1111r1r, 111r11r and 111111) can be transformed back into a square-triangle tiling.
}
\label{fig:discs2tiles}
\end{figure}

\begin{proposition}
\label{prop:entropy}
Call {\em pattern of size $k$} of a square-triangle tiling its restriction to its tiles which lies into some ball of radius $k$.
Then, for any given $\alpha\in(0,1)$, the number of different patterns which can appear in a square-triangle tiling with $\alpha$ squares for $1-\alpha$ triangles grows exponentially in $k^2$.
\end{proposition}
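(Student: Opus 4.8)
The plan is to prove the substantive lower bound, namely that the number of size-$k$ patterns is at least $2^{ck^2}$ for some constant $c=c(\alpha)>0$; the matching upper bound $2^{O(k^2)}$ is immediate, since a ball of radius $k$ meets only $O(k^2)$ tiles. The strategy is to produce, for the fixed $\alpha$, a single square-triangle tiling of square-frequency $\alpha$ that carries a positive density of pairwise disjoint bounded regions, each of which can be retiled in two distinct ways without altering anything outside it. Setting the state of each such region independently then yields exponentially many tilings of the same frequency whose restrictions to a radius-$k$ ball are pairwise distinct.

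The observation that makes the frequency easy to control is that a \emph{flip} never changes the tile counts. Indeed, if a bounded region has two square-triangle tilings, then writing its area as $n_s+n_t\sqrt{3}/4$ and using that $1$ and $\sqrt{3}$ are linearly independent over $\mathbb{Q}$ forces the number $n_s$ of squares and the number $n_t$ of triangles to be the same for both tilings. Hence flipping any collection of disjoint such regions inside a frequency-$\alpha$ tiling --- even infinitely many of them --- leaves every large-ball tile ratio unchanged, so the result still has square-frequency exactly $\alpha$.

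For the \emph{widget} itself I would take the regular dodecagon of unit edge: its area is $6+3\sqrt{3}$, so every square-triangle tiling of it uses $6$ squares and $12$ triangles. It admits at least two such tilings, for the following reason: the regular dodecagon is invariant under the rotation by $30^\circ$ about its center, whereas no square-triangle tiling of it can be (the center lies in a tile or at a vertex, and neither a square, a triangle, nor an admissible vertex configuration is invariant under a $30^\circ$ rotation); hence rotating any one tiling by $30^\circ$ yields a second, distinct tiling of the same dodecagon, with the same boundary. It then remains to exhibit a plane tiling of frequency exactly $\alpha$ containing a positive-density periodic array of disjoint such dodecagons, each flippable in isolation precisely because its boundary is untouched. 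One places the dodecagons on a sublattice and fills the complementary ``mortar'' with rigid patches; since every edge has unit length and every orientation is a multiple of $30^\circ$, the pieces interface consistently. A prescribed $\alpha\in(0,1)$ is reached by tuning the mortar between square-lattice (frequency $1$) and triangular-lattice (frequency $0$) fillings, which can always meet a given $\alpha$ while retaining a positive density of widgets; an irrational $\alpha$ is met exactly by modulating the mortar composition row by row along a Sturmian sequence of the appropriate slope.

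The count then follows: a ball of radius $k$ contains at least $ck^2$ widgets lying entirely inside it, each independently settable to either of its two states, and distinct choices yield distinct patterns since the widgets are disjoint and their two tilings differ. This produces at least $2^{ck^2}$ size-$k$ patterns, all occurring in tilings of square-frequency $\alpha$, which is the claim. I expect the main obstacle to be exactly the geometric bookkeeping of this ambient construction: simultaneously keeping the array of widgets disjoint and independently flippable while tuning the global frequency to the prescribed $\alpha$, the irrational case included. The existence of the flip is elementary; weaving a positive density of flips into a frequency-$\alpha$ tiling is where the care lies.
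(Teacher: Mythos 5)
Your key ingredients are sound, and they coincide with the paper's: the widget is the same regular dodecagon with its two tilings, and your two supporting observations (that any two square-triangle tilings of a bounded region use the same numbers of squares and triangles, by $\mathbb{Q}$-linear independence of $1$ and $\sqrt{3}$, and that a second dodecagon tiling is obtained from a first one by a $30^\circ$ rotation) are correct, and in fact make explicit what the paper leaves implicit. But there is a genuine gap, and it sits exactly where you yourself say you expect ``the care'' to lie: you never construct the ambient tiling. The sentence ``one places the dodecagons on a sublattice and fills the complementary mortar with rigid patches; since every edge has unit length and every orientation is a multiple of $30^\circ$, the pieces interface consistently'' is an assertion, not a proof. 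Square-triangle tilings are heavily constrained (the angles at every vertex must be composed of $60^\circ$ and $90^\circ$, and local choices propagate), so compatibility of edge lengths and orientations does not imply that a prescribed array of dodecagons together with square-grid and triangle-grid mortar extends to a tiling of the plane. Exhibiting such a configuration is the actual content of the paper's proof: it starts from the explicit columnar Sturmian tilings of Section~\ref{sec:lower}, replaces each vertex by a dodecagon and each square/triangle by an explicit building block $\mathcal{S}_n$/$\mathcal{T}_n$ (Fig.~\ref{fig:entropy2}) which is designed so that everything demonstrably fits together.

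There is also a quantitative issue that your ``tuning'' glosses over. With a \emph{fixed} positive density of dodecagons, each forcing $6$ squares and $12$ triangles, the achievable square-frequencies form an interval bounded away from $0$ and $1$; to reach every $\alpha\in(0,1)$ the widget density must be allowed to degrade (while staying positive) as $\alpha\to 0$ or $\alpha\to 1$. This is precisely the role of the parameter $n$ in the paper: the blocks $\mathcal{S}_n$, $\mathcal{T}_n$ grow with $n$, the tile counts $s_n^\square$, $s_n^\triangle$, $t_n^\square$, $t_n^\triangle$ are computed exactly, and an intermediate-value argument in the Sturmian parameter $\beta$ shows that for $n$ large enough (depending on $\alpha$) the ratio $\alpha/(1-\alpha)$ is attained exactly. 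Your Sturmian modulation of the mortar is the right instinct, but it must be coupled to such a two-parameter counting argument to actually meet a prescribed $\alpha$. In short: right approach, correct auxiliary lemmas, but the construction that carries the whole proof is missing.
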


In other words, the set of square-triangle tilings with fixed proportions of tiles has {\em positive entropy}.
Determining rigorously the value of the base of the growth exponent is an open problem - it is expected to be maximal for the tilings with a ratio of $\sqrt{3}$ squares for $4$ large triangles, known as {\em $12$-fold quasicrystals} \cite{ICJKS21,Wid93,Nie98,Kal99}.

\section{Lower bound}
\label{sec:lower}

We here explicitly build an $x$-packing with density $\delta(x)$.
We rely on elementary combinatoric on words.
For $\alpha\in [0,1)$, denote by $u(\alpha)$ the sequence of $\{0,1\}^{\mathbb{Z}}$ whose $k$-th letter $u_k$ is defined by
$$
u_k=0
\quad\Leftrightarrow\quad
k\alpha \mod 1\in [0,1-\alpha).
$$
For example (the bold letter has index $0$):
\begin{eqnarray*}
u(\tfrac{1}{3})&=&\cdots 01001001001001001001{\bf 0}0100100100100100100 \cdots\\
u(\sqrt{2}-1)&=&\cdots   10010101001010010101{\bf 0}0101001010010101001 \cdots
\end{eqnarray*}
The case $\alpha\notin\mathbb{Q}$ corresponds to a so-called {\em Sturmian word} introduced in \cite{MH40}.
Here, we will only use the fact that $u(\alpha)$ has a proportion $\alpha$ of letter $0$.

For $\tfrac{1}{2}< x\leq 1$, we use Sturmian words to mix the central and rightmost packings depicted in Fig.~\ref{fig:packings_known}.
Let $\alpha:=\tfrac{1-x}{x}\in [0,1)$.
We associate with $u(\alpha)$ the square-triangle tiling made of vertical columns of either squares of triangles, with the $k$-th column being made of squares if and only if the $k$-th letter of $u(\alpha)$ is $0$ (Fig.~\ref{fig:optimal_packing}).
This tiling has $\alpha$ squares for $2(1-\alpha)$ triangles.
Putting a large disc on each vertex and a small disc in the center of each square (recall Fig.~\ref{fig:tiles2discs}) yields a packing with $1$ large disc for $\alpha$ small discs, {\em i.e.}, an $x$-packing.
Its density is
$$
\frac{\alpha(\pi+\pi r^2)+2(1-\alpha)\tfrac{\pi}{2}}{4\alpha+2(1-\alpha)\sqrt{3}}=\frac{\pi(x+(1-x)r^2)}{4(1-x)+2(2x-1)\sqrt{3}}=\delta(x).
$$

\begin{figure}[hbt]
\centering
\includegraphics[width=\textwidth]{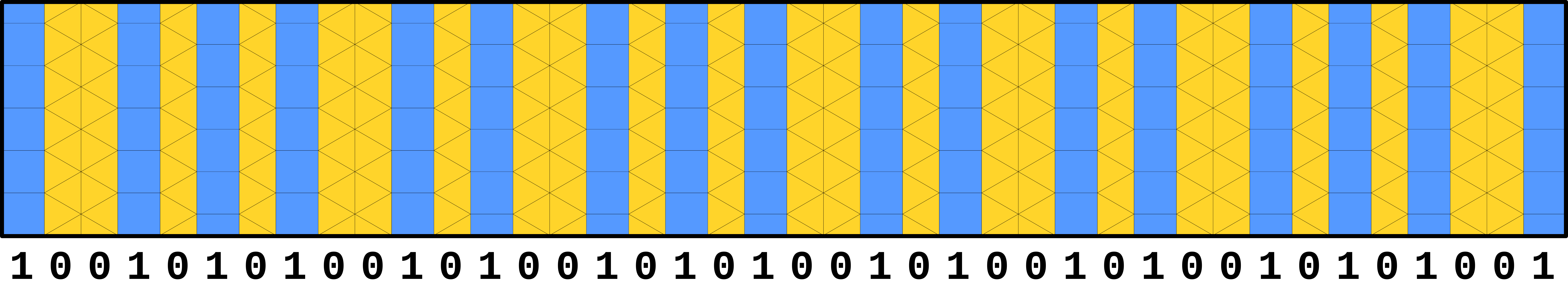}
\caption{The square-triangle tiling associated with the sequence $u(\sqrt{2}-1)$.}
\label{fig:optimal_packing}
\end{figure}

For $0<x<\tfrac{1}{2}$, we want to use alike Sturmian words to mix the leftmost and central packings depicted in Fig.~\ref{fig:packings_known}.
The situation is however more complicated because these packings do not ``mix well'' anymore: we have to combine two types of regions and deal with the ``joint'' (recall Fig.~\ref{fig:twinning}).
For this purpose, we introduce the following transformation on sequences.
If $u=(u_k)_{k\in\mathbb{Z}}$, then $\widehat{u}$ denotes the sequence obtained by replacing the $k$-th letter of $u$ by the same letter repeated $|k|+1$ times.
For example (the bold letter has index $0$):
\begin{eqnarray*}
\widehat{u}(\tfrac{1}{3})&=&\cdots 00000011111000000011{\bf 0}0011100000000011111\cdots\\
\widehat{u}(\sqrt{2}-1)&=&\cdots   00000111110000111001{\bf 0}0011100001111100000\cdots
\end{eqnarray*}

\begin{lemma}
\label{lem:word_expansion}
The transformation $u\mapsto \widehat{u}$ does not modify letter proportions.
\end{lemma}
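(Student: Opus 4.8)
The plan is to reduce the statement to a comparison between a weighted and an unweighted letter frequency. Since $\widehat{u}$ is obtained by repeating the $k$-th letter $|k|+1$ times, the proportion of a fixed letter (say $0$) among the blocks coming from the indices $-n\le k\le n$ equals the weighted ratio $N_n/D_n$, where $N_n:=\sum_{k=-n}^{n}(|k|+1)\,[u_k=0]$ and $D_n:=\sum_{k=-n}^{n}(|k|+1)=n^2+3n+1$. As $n\to\infty$ these contiguous block-windows exhaust $\widehat{u}$ and their lengths $D_n\sim n^2$ tend to infinity, so it suffices to show $N_n/D_n\to p$, where $p$ is the proportion of $0$ in $u$; the finitely many partial blocks one might add to turn a block-window into a genuine symmetric window of letters change $N_n$ and $D_n$ only by $O(n)$ and are thus negligible.

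First I would rewrite the weights by the double-counting identity $|k|+1=\#\{\,j:0\le j\le|k|\,\}$, which turns the weighted numerator into a sum of ordinary counts over nested windows. Writing $A_m:=\#\{\,|k|\le m:u_k=0\,\}$ (with $A_{-1}:=0$), this gives $N_n=\sum_{j=0}^{n}(A_n-A_{j-1})=(n+1)A_n-\sum_{i=0}^{n-1}A_i$. Because $u$ has proportion $p$, we have $A_m/(2m+1)\to p$, hence $A_m\sim 2pm$. The first term satisfies $(n+1)A_n\sim 2pn^2$, while the Ces\`aro-type sum satisfies $\frac{1}{n^2}\sum_{i=0}^{n-1}A_i\to p$, i.e. $\sum_{i=0}^{n-1}A_i\sim pn^2$. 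Subtracting yields $N_n\sim pn^2$, and dividing by $D_n\sim n^2$ gives $N_n/D_n\to p$, as required. The one-sided case (indices $k\ge 0$, weights $k+1$) is identical, with $n^2$ replaced by $n^2/2$ throughout.

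The point that needs care — and the only place where the statement could fail for a badly behaved $u$ — is that the weights $|k|+1$ are not uniform, so a priori the repetition could bias the frequency toward the letters sitting at large $|k|$. The resolution is quantitative: the weights grow only linearly whereas the window length grows quadratically, so $(|n|+1)/D_n\to 0$ and the weighted mean is a regular summation method that preserves the ordinary (Ces\`aro) limit. Equivalently, each annular window $\{\,j\le|k|\le n\,\}$ appearing in the identity above already has frequency tending to $p$ when $u$ has uniform letter frequencies — which is exactly the case for the Sturmian and eventually periodic words $u(\alpha)$ used here — and averaging these windows cannot move the common limit. This is the heart of the argument; everything else is bookkeeping.
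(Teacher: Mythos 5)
Your proof is correct and follows essentially the same route as the paper: both rewrite the linearly-weighted letter count as a sum of ordinary counts over nested windows (your identity $N_n=(n+1)A_n-\sum_{i=0}^{n-1}A_i$ is exactly the paper's reordering of a factor into $(u_i\cdots u_{i+k})^{i+1}(u_{i+1}\cdots u_{i+k})\cdots u_{i+k}$, viewed through counting functions) and then conclude by a Stolz--Ces\`aro argument, with partial blocks dismissed as $O(n)$ against window length $O(n^2)$. The only differences are cosmetic: you work with symmetric windows centered at the origin, while the paper fixes a starting index and grows one-sided factors.
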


\begin{proof}
Consider a factor $w$ of $\widehat{u}$ and a letter $a$ which has frequency $\alpha$ in $u$.
We assume it has only positive indices (the case of negative indices is symmetric, and if it has both positive and negative indices we break it into two parts).
It can be written
$$
w=pu_i^{i+1}u_{i+2}^{i+2}\cdots u_{i+k}^{i+k+1}s,
$$
where the prefix $p$ (resp. the suffix $s$) has length at most $i-1$ (resp. $i+k+1$).
By reordering the letters, $w$ is rewritten as
$$
p(u_i\cdots u_{i+k})^{i+1}(u_{i+1}\cdots u_{i+k})(u_{i+2}\cdots u_{i+k})\cdots u_{i+k}s.
$$
Fix $i$ and let $k$ grow.
Since $w$ has length of order $k^2$, the prefix $p$ and the suffix $s$ do not affect the proportion of $a$ in $w$.
By hypothesis, the proportion of $a$ in $u_i\cdots u_{i+k}$ - hence in $(u_i\cdots u_{i+k})^{i+1}$ - tends towards $\alpha$.
The Stolz-Cesàro theorem ensures that the proportion of $a$ in $(u_{i+1}\cdots u_{i+k})(u_{i+2}\cdots u_{i+k})\cdots u_{i+k}$ tends towards $\alpha$.
The proportion of $a$ in $w$ thus also tends towards $\alpha$
\end{proof}

We then associate with any sequence $u\in\{0,1\}^\mathbb{Z}$ a disc packing as follows (Fig.~\ref{fig:suboptimal_packing}).
It is made of vertical column of identical adjacent discs.
The columns alternate horizontally as the letters in $u$, with a colum of large discs for a letter $0$ or a column of small discs for a letter $1$.
Consecutive columns of small discs are disposed so that small discs form an hexagonal compact packing as in the leftmost packing depicted in Fig.~\ref{fig:packings_known}.
Consecutive columns of large discs form a square grid and a small disc is inserted in each hole between four large discs, as in the central packing depicted in Fig.~\ref{fig:packings_known}.

\begin{figure}[hbt]
\centering
\includegraphics[width=\textwidth]{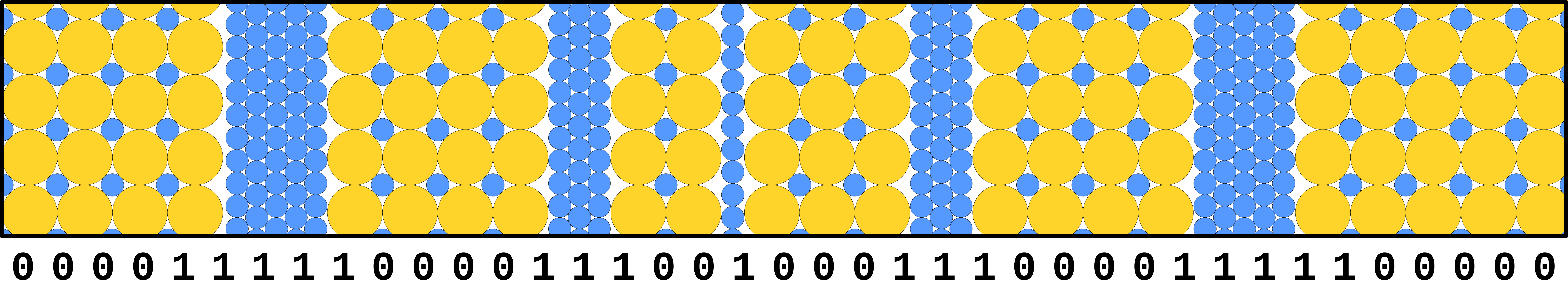}
\caption{The packing associated with the sequence $\widehat{u}(\sqrt{2}-1)$.}
\label{fig:suboptimal_packing}
\end{figure}

Let $\alpha:=\tfrac{x}{1-x}\in [0,1)$.
The sequence $u(\alpha)$ has a proportion $\alpha$ of $0$.
The packing associated with $u(\alpha)$ has $\alpha$ large discs for $1$ small disc, {\em i.e.}, it is an $x$-packing.
However, each factor $01$ or $10$ in a sequence $u$ corresponds in the associated packing to a ``joint'' between two regions which locally decreases the density.
In $u(\alpha)$, there is way too much $01$ and $10$ (two for each small disc) in order to reach the density $\delta(x)$.
This is where the above transformation comes in.
Lemma~\ref{lem:word_expansion} ensures that $\widehat{u}(\alpha)$ yields an $x$-packing as well as $u(\alpha)$, but it has moreover a negligible proportion of factors $01$ and $10$ (a factor of length $k$ contains $O(\sqrt{k})$ factors $01$ and $10$).
The two types of region thus respectively cover a proportion $\alpha$ and $1-\alpha$ of the plane.
The density is
$$
\frac{\alpha\pi+ \pi r^2}{4\alpha+2(1-\alpha)r^2\sqrt{3}}=\frac{\pi(x+(1-x)r^2)}{4x+2(1-2x)r^2\sqrt{3}}=\delta(x).
$$
This proves that the density given in Theorem~\ref{th:density} is a lower bound on the maximal density.

\section{Upper bound and densest packings}
\label{sec:upper}

As mentioned in the introduction, we follows the strategy of \cite{BF20} with only a small improvement to handle the fixed stoichiometry constraint.
The strategy used in \cite{BF20} resembles, though less complicated, the one used by Hales to prove the Kepler conjecture \cite{Hal05}.
We sketch it in Subsection~\ref{sec:strategy} with just enough detail to explain the improvement made here.
Then, in Subsection~\ref{sec:stoichiometry}, we sketch the proof of the upper bound claimed in Theorem~\ref{th:density}.
The interested reader is referred to Appendix~\ref{sec:details}, where are given all the technical details that allow to verify the proof {\em in combination with \cite{BF20}}.
Last, Subsection~\ref{sec:densest_packings} proves Theorem~\ref{th:packings}.

\subsection{Strategy of \cite{BF20}}
\label{sec:strategy}

Let $\delta$ be the candidate upper bound on the maximal density.
Fix a packing and consider an FM-triangulation of its disc centers.
Because density is not additive over triangles, it is convenient to introduce the {\em emptiness} $E$, defined for any triangle $T$ of the FM-triangulation by
$$
E(T):=\delta\times\textrm{area}(T)-\textrm{cov}(T),
$$
where $\textrm{area}(T)$ is the area of $T$ and $\textrm{cov}(T)$ is the area of $T$ covered by the discs centered on the vertices of $T$.
In particular, the density within a triangle $T$ (that is, the proportion of its area covered by discs) is less than or equal to $\delta$ if and only if its emptiness is non-negative.
To prove that the density of the considered packing is bounded from above by $\delta$, it thus suffices to prove that the emptiness, averaged over all the triangles, is non-negative.

But how to prove an inequality that involves infinitely many triangles?
The principle, called {\em localization} in \cite{Lag02}, consists in distributing the emptiness of each triangle in an intelligent way among its three vertices so that the total emptiness received by each vertex of the triangulation is non-negative.
This suffices to ensure that the emptiness averaged over any triangulation is non-negative.

Sadly, there are also infinitely many possible configurations around a vertex since the discs have coordinates in $\mathbb{R}^2$.
However, the properties of FM-triangulations ensure that the set of configurations to consider is {\em compact}.
This allows to perform rigorous calculations using interval arithmetic.
Many details are here omitted, but this gives a fairly good idea of the strategy of \cite{BF20}.

\subsection{Stoichiometry comes into play}
\label{sec:stoichiometry}

Let us now turn to packings with a fixed proportion $x$ of large discs ($x$-packings).
Instead of proving that the emptiness received by each vertex of the triangulation is at least $0$, we will prove that there exist real numbers $\alpha_1$ and $\alpha_r$ which satisfy
\begin{equation}
\label{eq:proportions}
x\alpha_1+(1-x)\alpha_r= 0
\end{equation}
and such that the emptiness received by the center of a disc of radius $q\in\{1,r\}$ is at least $\alpha_q$.
This will ensure that the emptiness averaged over any triangulation that has $x$ large discs for $1-x$ small discs is non-negative.
Note that $\alpha_1$ and $\alpha_r$ are necessarily of opposite signs: the local density around a disc depends on the radius of this disc!

Further, this has to be done for any $x\in[0,1]$.
Again, we rely on the fact that $[0,1]$ is compact to perform rigorous computations.
Namely, we cut $[0,1]$ in intervals which are sufficiently smalls so that the inequality around each vertex can be rigorously ensured by interval arithmetic computations (as few as $100$ intervals of length $0.01$ appeared to be sufficient).

This proves that the density given in Theorem~\ref{th:density} is an upper bound on the maximal density (see Appendix~\ref{sec:details} for details).

\subsection{Densest packings}
\label{sec:densest_packings}

For lighter wording, let us call {\em bad neighborhood} the following neighborhoods:
\begin{itemize}
\item if $x\leq\tfrac{1}{2}$, the neighborhood of a small (resp. large) disc other than 1111 or rrrrrr (resp. 1r1r1r1r);
\item if $x\geq \tfrac{1}{2}$, the neighborhood of a small (resp. large) disc other than 1111 (resp. 1r1r1r1r, 1111r1r, 111r11r or 111111).
\end{itemize}
Proving Theorem~\ref{th:packings} thus amounts to prove that in any $x$-packing of density $\delta(x)$, the proportion of discs with a bad neighborhood is zero.

Once again, a very slight modification in the computer-assisted proof is all that is required.
Instead of only proving that the emptiness received by each vertex of the triangulation is at least $\alpha_q$ for the center of any disc of size $q\in\{1,r\}$, we also prove that there exists $\eta>0$ such that the emptiness received by any center of a disc of size $q$ with a bad neighborhood is at least $\alpha_q+\eta$.
Each bad neighborhood thus causes a local loss of density, which is therefore only possible for a zero proportion of discs in a maximum density packing.
More precisely, this ensures that in any $x$-packing with density $\delta\leq\delta(x)$, the proportion of discs that have a bad neighborhood is at most
$$
\frac{\delta(x)-\delta}{\eta}.
$$
In other words, the amount of defects in an packing is linearly bounded by the gap between its density and the maximal density.

To find a suitable $\eta$, we just add a candidate value for $\eta$ to both $\alpha_1$ and $\alpha_r$ in the program and rerun the computation to check whether all the inequalities still hold.
For a partition of $[0,1]$ in $100$ intervals, we found that the value $\eta=10^{-4}$ works (smaller intervals allow to improve this value a little bit, at the cost of a longer calculation time).
This proves Theorem~\ref{th:packings}.

\section{Positive entropy}
\label{sec:entropy}

We here prove Proposition~\ref{prop:entropy}, which shows that the densest packings with a given excess of large discs form a rather complicated set.\\

\begin{proof}
Let $\alpha\in(0,1)$. 
There are two different ways to tile a regular dodecagon by squares and regular triangles (Fig.~\ref{fig:entropy1}).
Any square-triangle tiling with a positive frequency of dodecagons will thus yields a number of different patterns of size $k$ which grows exponentially in $k^2$ (by tiling independently each dodecagon).
We shall define such a tiling with $\alpha$ squares for $1-\alpha$ large triangles.

The idea is to replace, in a square-triangle tiling associated with a sequence $u(\beta)$ as explained in Section~\ref{sec:lower}, each square and triangle by respectively, the building blocks $\mathcal{S}_n$ and $\mathcal{T}_n$ depicted in Fig.~\ref{fig:entropy2}, and each vertex by a dodecagon depicted in Fig.~\ref{fig:entropy1}.
The point is to find suitable $\beta$ and $n$.
The numbers $s_n^\square$ of squares and $s_n^\triangle$ of triangle in $\mathcal{S}_n$ (resp. $t_n^\square$ and $t_n^\triangle$ in $\mathcal{T}_n$) are
\begin{eqnarray*}
s_n^\square&=&(n+1)^2+\tfrac{4n}{2}+6=n^2+4n+7\\
s_n^\triangle&=&4(2n+1)+12=8n+16,\\
t_n^\square&=&\tfrac{3n}{2}+\tfrac{6}{2}=\tfrac{3}{2}n+3,\\
t_n^\triangle&=&3+5+\ldots+(2n+1)+\tfrac{12}{2}=n^2+2n+6.
\end{eqnarray*}
The ratio of squares and triangles in the resulting tiling is
$$
f(\beta,n):=\frac{\beta s_n^\square+(1-\beta)t_n^\square}{\beta s_n^\triangle+(1-\beta)t_n^\triangle}.
$$
In particular
\begin{eqnarray*}
\lim_{n\to\infty}f(0,n)&=&\lim_{n\to\infty}\frac{\tfrac{3}{2}n+3}{n^2+2n+6}=0,\\
\lim_{n\to\infty}f(1,n)&=&\lim_{n\to\infty}\frac{n^2+4n+7}{8n+16}=+\infty.
\end{eqnarray*}
Since $\beta\mapsto f(\beta,n)$ is continuous, this ensures that for any $\alpha\in(0,1)$, {\em i.e.}, for any $\tfrac{\alpha}{1-\alpha}\in(0,\infty)$, for $n$ large enough there exists $\beta\in(0,1)$ such that $f(\beta,n)=\tfrac{\alpha}{1-\alpha}$.
For such $\beta$ and $n$, replacing the squares and triangles of the tiling associated with $u(\beta)$ by the building blocks $\mathcal{S}_n$ and $\mathcal{T}_n$ yields a tiling with $\alpha$ squares for $1-\alpha$ triangles.
\end{proof}

\begin{figure}[hbt]
\centering
\includegraphics[width=0.65\textwidth]{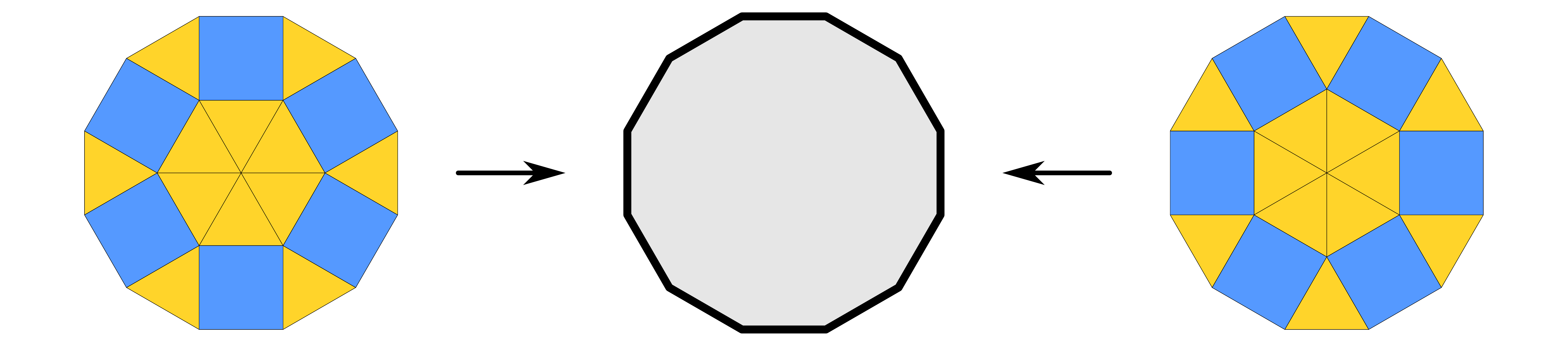}
\caption{The two ways to tile a regular dodecagon.}
\label{fig:entropy1}
\end{figure}

\begin{figure}[hbt]
\centering
\includegraphics[width=0.65\textwidth]{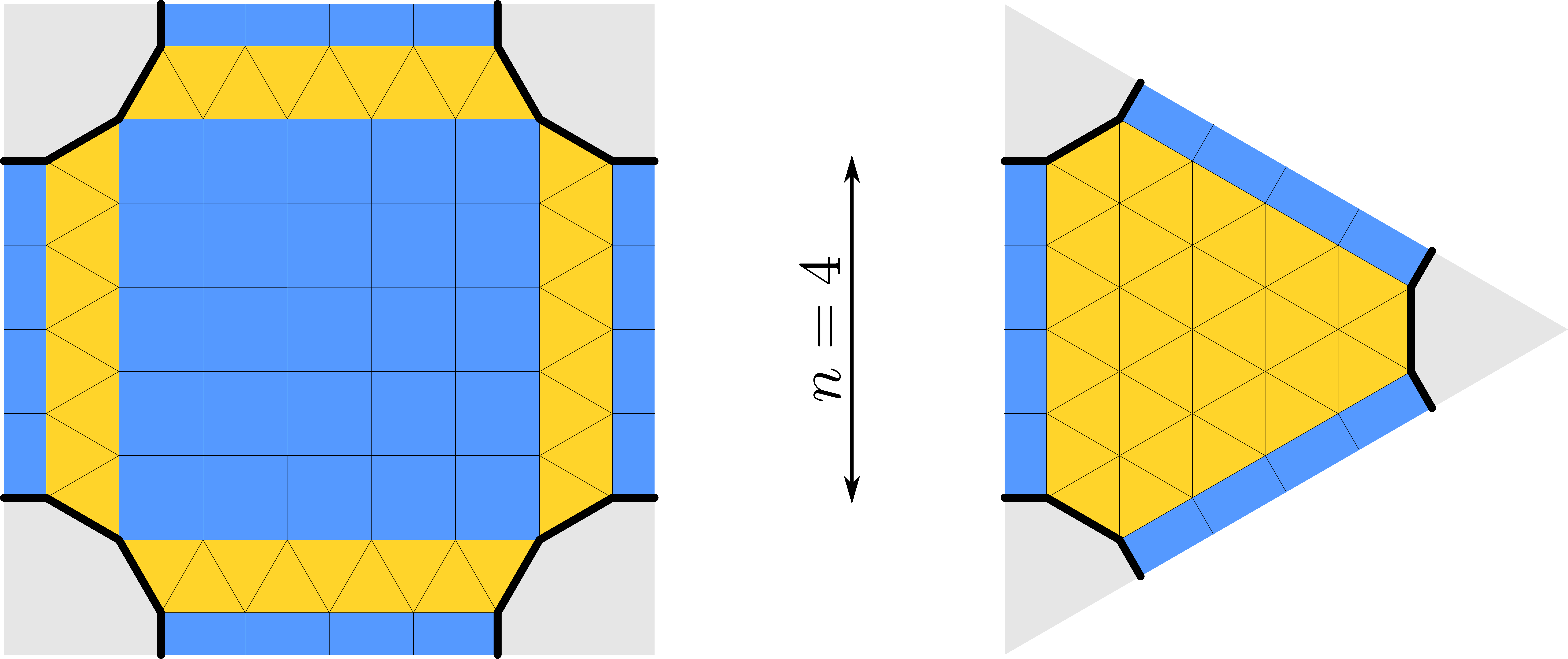}
\caption{
The building blocks $\mathcal{S}_n$ and $\mathcal{T}_n$ for $n=4$.
}
\label{fig:entropy2}
\end{figure}

\appendix
\section{Proof of the upper bound on the density}
\label{sec:details}

We use here the notations and concepts of \cite{BF20}, of which this article can thus be seen as a companion article.
We apologize to the reader for the extra work involved in carefully checking the proof, but it seemed absurd to copy and paste pages from a previous article under the pretext of being self-contained.

Fix a proportion $x$ of large discs.
Fix an $x$-packing and an FM-triangulation $\mathcal{T}$ of the center of its discs.
Recall that the {\em emptiness} $E(T)$ of any triangle $T$ is defined by
$$
E(T):=\delta(x)\times\textrm{area}(T)-\textrm{cov}(T),
$$
where $\textrm{area}(T)$ is the area of $T$, $\textrm{cov}(T)$ is the area of $T$ inside the discs centered on the vertices and $\delta(x)$ is the density defined in Th.~\ref{th:density}.
To prove that our packing has density at most $\delta(x)$, we will prove
\begin{equation}
\label{eq:globalE}
\sum_{T\in \mathcal{T}}E(T)\geq 0.
\end{equation}
For this, we will define a potential $U$ on triangles satisfying the {\em global inequality}
\begin{equation}
\label{eq:global}
\sum_{T\in \mathcal{T}}U(T)\geq 0,
\end{equation}
and, for any triangle $T\in\mathcal{T}$, the {\em local inequality}
\begin{equation}
\label{eq:local}
E(T)\geq U(T).
\end{equation}
The term {\em Global inequality} is used here for an inequality that must be verified on the whole packing, as opposed to a {\em local inequality} that must be verified on each possible triangle, independently of each other.

The potential $U$ will be the sum of a {\em vertex potential} $U_v(T)$, defined for any vertex $v$ of $T$, and an {\em edge potential} $U_e(T)$, defined for any edge $e$ of $T$.
The edge potential shall satisfy the inequality
\begin{equation}
\label{eq:edge}
\sum_{T\in\mathcal{T}|e\in T} U_e(T)\geq 0.
\end{equation}
The vertex potential must satisfy a {\em modified version} of the {\em vertex inequality} of \cite{BF20}, namely
\begin{equation}
\label{eq:vertex}
\sum_{T\in\mathcal{T}|v\in T}U_v(T)\geq \alpha_q,
\end{equation}
where $q\in\{1,r\}$ is the radius of the disc of center $v$ and the two real numbers $\alpha_1$ and $\alpha_r$ satisfy
\begin{equation}
\label{eq:proportions}
x\alpha_1+(1-x)\alpha_r=0.
\end{equation}
Let us stress that this is {\em the one and only difference} with the strategy of \cite{BF20}!
The global inequality~\eqref{eq:global} then follows:
$$
\sum_{T\in \mathcal{T}}U(T)=\sum_{e\in \mathcal{T}}\underbrace{\sum_{T\ni e} U_e(T)}_{\geq 0}+\underbrace{\sum_{v\in \mathcal{T}\atop q=1}\underbrace{\sum_{T\ni v} U_v(T)}_{\geq\alpha_1}+\sum_{v\in \mathcal{T}\atop q=r}\underbrace{\sum_{T\ni v} U_v(T)}_{\geq \alpha_r}}_{\textrm{$\geq 0$ by disc proportions and Eq.~\eqref{eq:proportions}}}\geq 0.
$$

\begin{remark}
It may seem absurd to introduce the potential $U$ since instead of having only the inequality~\eqref{eq:globalE} to prove, we end up with the inequalities~\eqref{eq:global} and \eqref{eq:local}.
However, the inequality~\eqref{eq:local} is {\em local} and can - at least theoretically - be checked by computer.
Further, the potential will be chosen to reduce the verification of the global inequality~\eqref{eq:global} to the verification of the inequalities~\eqref{eq:edge} and \eqref{eq:vertex}, which are as well local.
In other words, a hard global inequality is replaced by three easier local inequalities (if one can find such a potential $U$).
\end{remark}

Let us now define the potential $U$ and the constants $\alpha_1$ and $\alpha_r$.
Both the edge potential $U_e$ and the vertex potential $U_v$ will be defined as in \cite{BF20}.
We shall here only define the values of the {\em base vertex potentials} $V_{ijk}$ and of the parameters $q_{xy}$ and $l_{xy}$.

The $6$ {\em base vertex potentials} $V_{111}$, $V_{11r}$, $V_{1r1}$, $V_{1rr}$, $V_{r1r}$ and $V_{rrr}$, as well as the real number $\alpha_1$ and $\alpha_r$ are defined by the eight following independent equations.
The first four equations ensure that the sum of the base vertex potentials in any triangle is equal to the emptiness of this triangle (as in \cite{BF20}):
\begin{eqnarray*}
3V_{111}&=&E_{111},\\
3V_{rrr}&=&E_{rrr},\\
2V_{11r}+V_{1r1}&=&E_{11r},\\
2V_{1rr}+V_{r1r}&=&E_{1rr}.
\end{eqnarray*}
The two following equations ensure Ineq.~\eqref{eq:vertex} around large and small discs of the central packing depicted in Fig.~\ref{fig:packings_known}:
\begin{eqnarray*}
8V_{11r}&=&\alpha_1,\\
4V_{1r1}&=&\alpha_r.
\end{eqnarray*}

The seventh equation depends on $x$:
\begin{eqnarray*}
6V_{qqq}&=&\alpha_q,
\end{eqnarray*}
with $q=r$ if $x<\tfrac{1}{2}$ or $q=1$ otherwise.
It ensures Ineq.~\eqref{eq:vertex} around small (resp. large) discs in the leftmost (resp. rightmost) packing in Fig.~\ref{fig:packings_known}.

The eighth and last equation also depends on $x$.
It assigns an arbitrary value to $V_{1rr}$ (the way this value has been chosen is discussed at the end of this section):
$$
V_{1rr}=\left\{\begin{array}{cl}
\tfrac{7x^2+6x-1}{1000} & \textrm{if $x<\tfrac{1}{2}$},\\[10pt]
-\tfrac{9}{1000} & \textrm{otherwise}.
\end{array}\right.
$$

The potential $V_{1rr}$ turns out to be discontinuous in $x=\tfrac{1}{2}$: this is because the seventh equation is completly different depending on whether $x<\tfrac{1}{2}$ or not (this also explains the discontinuities in Fig.~\ref{fig:stats_triangles} and \ref{fig:stats_a1}).

\begin{lemma}
For $0\leq x\leq 1$, the equation~\eqref{eq:proportions} is satisfied.
\end{lemma}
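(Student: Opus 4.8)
The plan is to eliminate the six base vertex potentials from the eight defining equations so as to express $\alpha_1$ and $\alpha_r$ purely in terms of the emptiness values $E_{111}$, $E_{rrr}$ and $E_{11r}$, and then to substitute the explicit formula for $\delta(x)$ given in Theorem~\ref{th:density}. First I would observe that $\alpha_1$ and $\alpha_r$ do not involve all eight equations: the equation fixing $V_{1rr}$ and the relation $2V_{1rr}+V_{r1r}=E_{1rr}$ only serve to determine $V_{1rr}$ and $V_{r1r}$, which never occur in the expressions for $\alpha_1$ and $\alpha_r$. In particular the arbitrary choice of $V_{1rr}$ is irrelevant here. Combining $\alpha_1=8V_{11r}$, $\alpha_r=4V_{1r1}$ and $2V_{11r}+V_{1r1}=E_{11r}$ gives $\alpha_r=4E_{11r}-\alpha_1$, hence the single identity $\alpha_1+\alpha_r=4E_{11r}$.

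Next I would use the seventh equation to pin down the remaining degree of freedom in each of the two regimes. For $x\geq\tfrac12$, the equation $6V_{111}=\alpha_1$ together with $3V_{111}=E_{111}$ yields $\alpha_1=2E_{111}$, whence $\alpha_r=4E_{11r}-2E_{111}$; for $x<\tfrac12$, the equation $6V_{rrr}=\alpha_r$ together with $3V_{rrr}=E_{rrr}$ yields $\alpha_r=2E_{rrr}$, whence $\alpha_1=4E_{11r}-2E_{rrr}$. At this point $\alpha_1$ and $\alpha_r$ are completely determined by the three emptiness constants.

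The key computational step is to evaluate these three emptinesses, which are exactly those of the standard triangles occurring in the three packings of Fig.~\ref{fig:packings_known}. A direct angular computation gives $E_{111}=\sqrt{3}\,\delta(x)-\tfrac{\pi}{2}$ (equilateral triangle of side $2$), $E_{rrr}=\sqrt{3}r^2\delta(x)-\tfrac{\pi r^2}{2}$ (equilateral triangle of side $2r$), and $E_{11r}=\delta(x)-\tfrac{\pi(1+r^2)}{4}$ (the right isosceles half-square of the central packing, of area $1$ with angles $\tfrac{\pi}{4},\tfrac{\pi}{4},\tfrac{\pi}{2}$). Substituting into $x\alpha_1+(1-x)\alpha_r$ and collecting terms, the $\pi$-part reduces to $-\pi\bigl(x+(1-x)r^2\bigr)$ in both regimes, while the coefficient of $\delta(x)$ is exactly the denominator appearing in the relevant branch of $\delta(x)$ in Theorem~\ref{th:density}: namely $4x+2(1-2x)r^2\sqrt{3}$ for $x<\tfrac12$ and $4(1-x)+2(2x-1)\sqrt{3}$ for $x\geq\tfrac12$. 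Hence $\delta(x)$ times this coefficient equals $\pi\bigl(x+(1-x)r^2\bigr)$, the two contributions cancel, and $x\alpha_1+(1-x)\alpha_r=0$.

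The argument is a direct verification, so there is no serious conceptual obstacle; the only care needed is the bookkeeping in the two cases. The main practical step is the geometric evaluation of the three emptinesses and then recognising the denominator of $\delta(x)$ in the coefficient of $\delta(x)$ — the identity holds precisely because the claimed value of $\delta(x)$ was built so that the extremal triangles balance in the averaged sense prescribed by Eq.~\eqref{eq:proportions}. I would also check that the construction is consistent at $x=\tfrac12$, where the two branches of $\delta$ and hence both regimes give the same $\alpha_1,\alpha_r$.
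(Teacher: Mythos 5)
Your proof is correct, but it reaches the key identity by a genuinely different route than the paper. You eliminate the base vertex potentials to get $\alpha_1+\alpha_r=4E_{11r}$ together with $\alpha_1=2E_{111}$ (for $x\geq\tfrac{1}{2}$) or $\alpha_r=2E_{rrr}$ (for $x<\tfrac{1}{2}$), then compute the three emptinesses by elementary geometry and verify $x\alpha_1+(1-x)\alpha_r=0$ algebraically from the explicit formula for $\delta(x)$; your values $E_{111}=\sqrt{3}\,\delta(x)-\tfrac{\pi}{2}$, $E_{rrr}=\sqrt{3}\,r^2\delta(x)-\tfrac{\pi r^2}{2}$ and $E_{11r}=\delta(x)-\tfrac{\pi(1+r^2)}{4}$ are all correct, and the coefficient of $\delta(x)$ does collapse to the relevant denominator in Theorem~\ref{th:density}. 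The paper performs the same elimination of potentials but obtains the needed linear relation among emptinesses without computing any areas: since the optimal packing built in Section~\ref{sec:lower} has density exactly $\delta(x)$, its total emptiness vanishes, and its FM-triangulation consists (up to negligible proportion) of triangles 11r and 111 (resp.\ 11r and rrr) in known ratios, giving $4(1-x)E_{11r}+2(2x-1)E_{111}=0$ (resp.\ $4xE_{11r}+2(1-2x)E_{rrr}=0$) directly. So the paper's argument is more conceptual --- it reads the identity as ``the extremal phases have zero excess'' --- while yours is self-contained and does not invoke the lower-bound construction; both yield the same lemma.

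One side remark of yours is wrong, though it does not affect the proof: the two regimes do \emph{not} give the same $(\alpha_1,\alpha_r)$ at $x=\tfrac{1}{2}$. The seventh equation switches from $6V_{rrr}=\alpha_r$ to $6V_{111}=\alpha_1$, and the resulting pairs differ (the paper itself notes the discontinuity of the potentials at $x=\tfrac{1}{2}$). What is true, and all that is needed, is that both pairs satisfy $\alpha_1+\alpha_r=4E_{11r}=0$ at $x=\tfrac{1}{2}$, since $\delta(\tfrac{1}{2})=\tfrac{\pi(1+r^2)}{4}$ makes $E_{11r}$ vanish there; so Eq.~\eqref{eq:proportions} holds in either regime.
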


\begin{proof}
First, consider the case $x\geq\tfrac{1}{2}$.
The density $\delta(x)$ is the density of a packing that corresponds to a square-triangle tiling with $\alpha=\tfrac{1-x}{x}$ square for $2(1-\alpha)$ triangles.
Each square corresponds in the FM-triangulation to $4$ triangles, each with two large discs and one small disc, and each triangle corresponds to one triangle with three large discs.
Since the total excess of this packing is zero, one has
$$
0=4\alpha E_{11r}+(1-\alpha)E_{111}=4(1-x)E_{11r}+2(2x-1)E_{111}.
$$
We then rely on the equations that define the $V_{ijk}$'s and $\alpha_q$'s:
\begin{eqnarray*}
0
&=&4(1-x)(2V_{11r}+V_{1r1})+2(2x-1)3V_{111}\\
&=&(1-x)8V_{11r}+(1-x)4V_{1r1}+(2x-1)6V_{111}\\
&=&(1-x)\alpha_1+(1-x)\alpha_r+(2x-1)\alpha_1\\
&=&x\alpha_1+(1-x)\alpha_r.
\end{eqnarray*}

Now, consider the case $x<\tfrac{1}{2}$.
The density $\delta(x)$ is the density of a packing whose FM-triangulation has $4\alpha$ triangles with two large discs and one small disc for $2(1-\alpha)$ triangles with three small squares (and a negligible proportion of other triangles), with $\alpha=\tfrac{x}{1-x}$.
We then proceed as above:
\begin{eqnarray*}
0
&=&4\alpha E_{11r}+2(1-\alpha)E_{rrr}\\
&=&4x E_{11r}+2(1-2x)E_{rrr}\\
&=&4x(2V_{11r}+V_{1r1})+2(1-2x)3V_{rrr}\\
&=&x8V_{11r}+x4V_{1r1}+(1-2x)6V_{rrr}\\
&=&x\alpha_1+x\alpha_r+(1-2x)\alpha_r\\
&=&x\alpha_1+(1-x)\alpha_r.
\end{eqnarray*}
The equation~\eqref{eq:proportions} is thus satisfied for any $x$.
\end{proof}

We further proceed as in \cite{BF20}.
First, one finds coefficients $m_1$ and $m_r$ of the angle deviation in the vertex potential which ensures Ineq.~\eqref{eq:vertex} around any vertex of any FM-triangulation.
We then compute the ceiling $Z_1$ and $Z_r$.
The vertex potential is fully defined.
The edge potential is defined by the parameters given in Tab.~\ref{tab:edge_parameters}.
One then finds $\varepsilon>0$ such that the local inequality~\eqref{eq:local} follows from the mean value theorem for any triangle of the FM-triangulation with distance at most $\varepsilon$ between any two of its discs.
We finally check with a computer program the local inequality~\eqref{eq:local} on all the other possible triangles by dichotomy, using interval arithmetic.

\begin{table}
\centering
\begin{tabular}{c|cccccc}
& $l_{11}$ & $q_{11}$ & $l_{1r}$ & $q_{1r}$ & $l_{rr}$ & $q_{rr}$\\
\hline
$x<0.5$ & $2.5$ & $0.38$ & $1.83$ & $0.15$ & $1.18$ & $0.15$\\
$x\geq 0.5$ & $2.5$ & $0.02$ & $1.83$ & $0.05$ & $1.18$ & $0.08$\\
\end{tabular}
\caption{Parameters for the edge potential.}
\label{tab:edge_parameters}
\end{table}

Last but not least, we have to check that $\delta(x)$ is maximal not only for a given proportion $x$, but for any proportion in $[0,1]$.
Once again, we use interval arithmetic: the interval $[0,1]$ is divided into intervals that are small enough to perform the above check with each of these intervals as a value for $x$.

We first wrote a SageMath/Python program to perform the above check.
It works quite well for a given value of $x$ but is too slow to check all the possible proportions.
We thus wrote a C++ program relying on the Boost interval arithmetic library.
Cutting $[0,1]$ in $100$ intervals of length $0.01$ appeared to be sufficient to check both the local and global inequalities for any proportion $x$.
Fig.~\ref{fig:stats_triangles} gives an idea of the number of triangles on which, for each interval of proportions, the local inequality has been checked by dichotomy (this is the most time-consuming part of the check).
The complete checking took about $3$ minutes on our laptop, an Intel Core i5-7300U with $4$ cores at $2.60$GHz and $16$ Go RAM.

\begin{figure}[hbt]
\centering
\includegraphics[width=\textwidth]{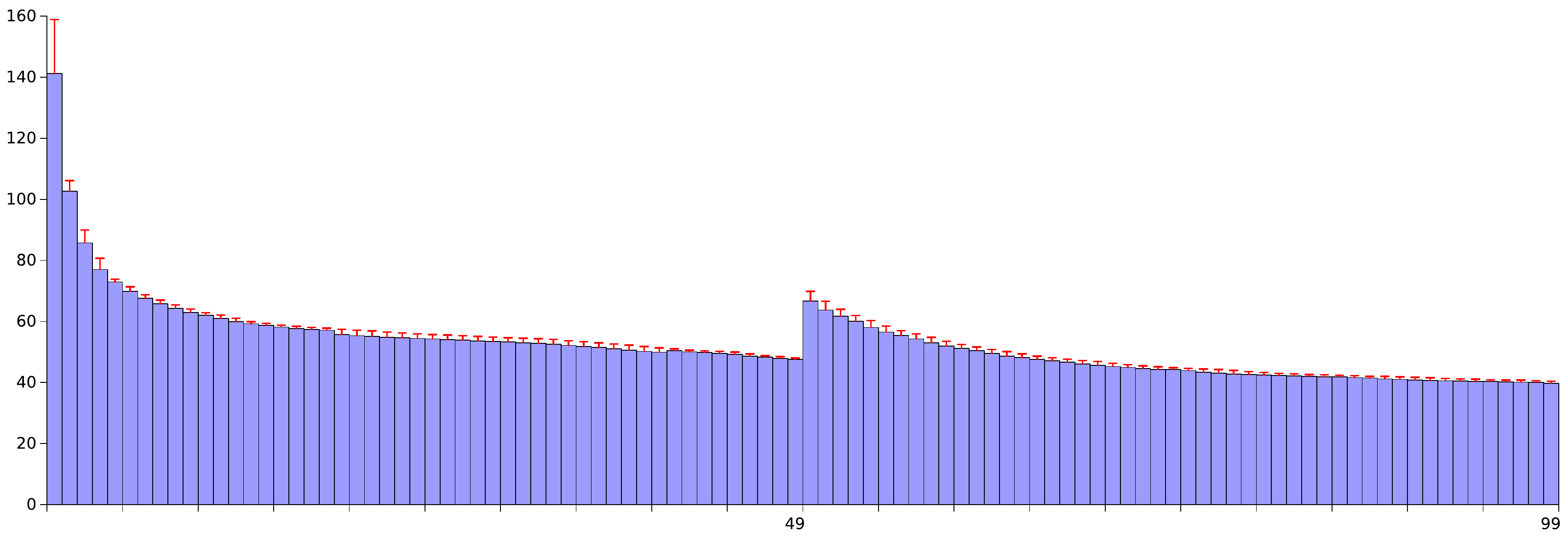}
\caption{
Number of thousands of triangle sets on which the local inequality~\eqref{eq:local} is checked by the computer,  as a function of the percentage of large discs.
The error bars show how this number increases when $\alpha_q$ is replaced by $\alpha_q+\eta$, with $\eta=10^{-4}$, for discs of radius $q\in\{1,r\}$ with a bad neighborhood.
}
\label{fig:stats_triangles}
\end{figure}

It is also worth taking a look at the values of $\alpha_1$ ad $\alpha_r$, depicted in Figure~\ref{fig:stats_alpha}.
For $x<\tfrac{1}{2}$, one has $\alpha_1<0$, that is, the density is locally higher than the average around large discs.
This is because those are involved in the maximum density phase.
On the contrary, $\alpha_r>0$, that is, the density is locally lower than the average around small discs because many of them are involved in the less dense compact hexagonal phases.
The situation is the inverse for $x>\tfrac{1}{2}$.

\begin{figure}[hbt]
\centering
\includegraphics[width=\textwidth]{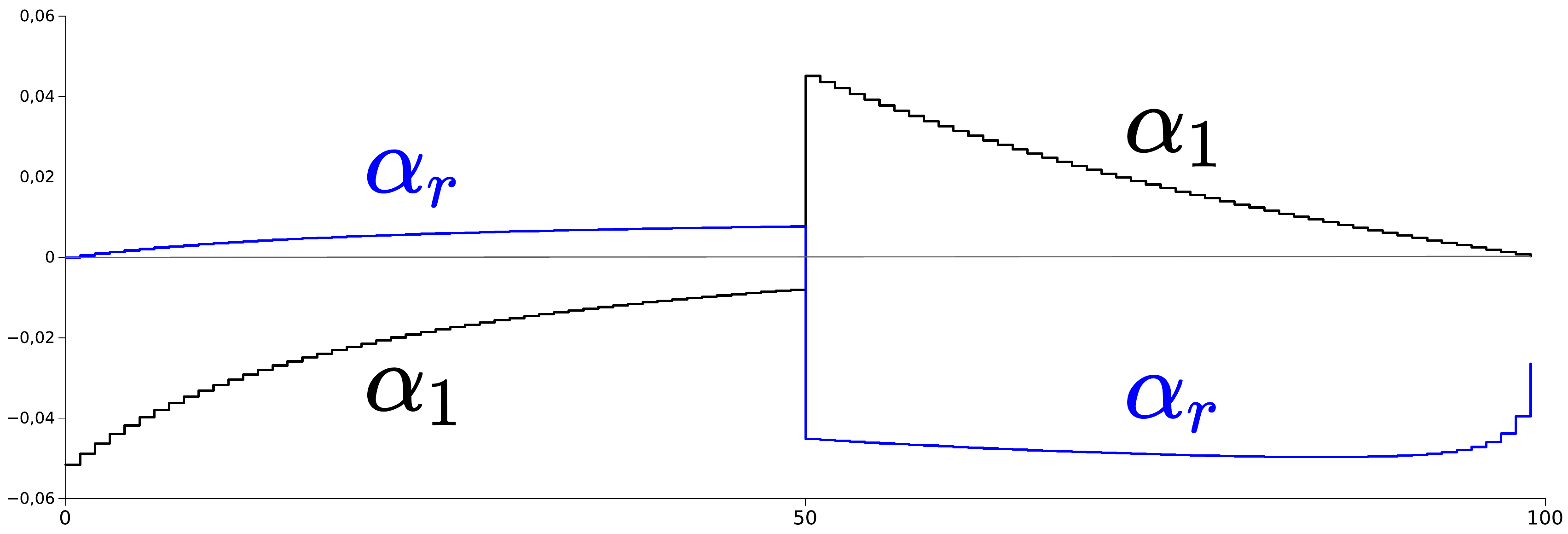}
\caption{
Value of $\alpha_1$ and $\alpha_r$ found by computer to satisfy the vertex inequality~\eqref{eq:vertex}, as a function of the percentage of large discs.
}
\label{fig:stats_alpha}
\end{figure}

To conclude this section, let us explain how the values of $V_{1rr}$ have been chosen.
For a given proportion $x$, we tried different values of $V_{1rr}$.
Some values yield an error ({\em i.e.}, an FM-triangle which does not satisfy the local inequality) and some other values yield an infinite loop-recursion when trying to refine too far the precision while checking the local inequality.
But many values allow to successfully check the local inequality: we took for $V_{1rr}$ the average of these values.
We do this for different proportions $x$, then interpolate the obtained values by a polynomial to define $V_{1rr}$ for any $x$.

\bibliographystyle{alpha}
\bibliography{proportions}

\end{document}